\documentclass[graybox]{svmult}


\usepackage{type1cm}        
%
\usepackage{makeidx}         
\usepackage{graphicx}        
\usepackage{multicol}        
\usepackage[bottom]{footmisc}

\usepackage{newtxtext}       %
\usepackage[varvw]{newtxmath}       


\makeindex             

\usepackage{float}
%
\usepackage{algpseudocode}
%
%
%
\usepackage{tikz}
\usetikzlibrary{positioning}
\usepackage{caption}
\usepackage{changepage}
\newcommand*\diff{\mathop{}\!\mathrm{d}}
\makeatletter
\newcommand{\setword}[2]{%
  \phantomsection
  #1\def\@currentlabel{\unexpanded{#1}}\label{#2}%
}



\title*{Convergence of the Euler--Maruyama particle scheme for a regularised McKean--Vlasov equation arising from the calibration of local-stochastic volatility models}
\titlerunning{Convergence of Euler--Maruyama particle scheme for regularised calibration problem}

\author{Christoph Reisinger and Maria Olympia Tsianni}

\institute{Christoph Reisinger  \at Mathematical Institute, Oxford University, \email{christoph.reisinger@maths.ox.ac.uk}
\and Maria Olympia Tsianni \at Mathematical Institute, Oxford University, \email{maria.tsianni@maths.ox.ac.uk}}

\begin{document}

\maketitle

\abstract{In this paper, we study the Euler--Maruyama scheme for a particle method to approximate the McKean--Vlasov dynamics of calibrated local-stochastic volatility (LSV) models. Given the open question of well-posedness of the original problem,
we work with regularised coefficients and prove that  under certain assumptions on the inputs, the regularised model is well-posed. Using this result, 
we prove the strong convergence of the Euler--Maruyama scheme to the particle system with rate $1/2$ in the step-size and obtain an explicit dependence of the error on the regularisation parameters.
Finally, we implement the particle method for the calibration of a Heston-type LSV model to illustrate the convergence in practice and to investigate how the choice of regularisation parameters affects the accuracy of the calibration.}

\abstract*{In this paper, we study the Euler--Maruyama scheme for a particle method to approximate the McKean--Vlasov dynamics of a calibrated local-stochastic volatility (LSV) model. Given the open question of whether an LSV model exists for any given arbitrage-free smile, we work on a regularised setting of the calibrated dynamics and prove that for fixed regularisation parameters and certain assumptions on the inputs of the model, the regularised problem is well-posed. Using this result, 
we prove the strong convergence of the Euler--Maruyama scheme on the particle system with rate $1/2$ in the step-size and obtain an explicit dependence between the error and the regularisation parameters.
Finally, we implement the particle method for the calibration of a Heston-type LSV model to investigate how the choice of regularisation parameters affects the accuracy of the calibration.}


\section{Introduction}
Since the Black--Scholes (BS) model was first introduced, extensive research in quantitative finance has taken place for the development of more sophisticated models that successfully price and hedge financial instruments. An extension to the BS model is the Local Volatility (LV) model introduced by Dupire in \cite{DUPIREE}. The LV model exactly reproduces any arbitrage free volatility surface, however it has unrealistic dynamics. Another class of more enhanced models are the Stochastic Volatility (SV) models, which generate an implied volatility smile and better describe the market dynamics. However, as parametric models, they only have a finite number of parameters and are therefore unable to capture the entire implied volatility surface. 
Local-stochastic volatility models (LSV), first published to our knowledge in \cite{jex}, combine the strengths of both LV and SV models and are the \textit{state-of-the-art} in the finance industry.

For a given time horizon $[0,T]$, a general LSV model is of the form 
\begin{equation} \label{general}
    \diff S_t = S_t\, g(Y_t)\, \sigma(t,S_t)\, \diff W_t, 
\end{equation}
where $S_t$ is the current value of the one-dimensional process $S = (S_t)_{t\in [0,T]}$, the spot price of the underlying asset, and $Y_t$ is the current value of the stochastic volatility process $Y = (Y_t)_{t\in[0,T]}$. Popular choices for $(Y_t)_{t \in [0,T]}$ are the exponential Ornstein--Uhlenbeck and the Cox--Ingersoll--Ross processes. Through the stochastic volatility component, $g(Y_t)$, the model better captures stylised features of the market dynamics such as volatility clustering and negative correlation between asset price and volatility. Embedding the local volatility $\sigma(t,S_t)$ brings accuracy to the model as it can exactly calibrate any market observed volatility surface. Indeed, an LSV model is able to better price and hedge options, as studied by several practitioners in the field throughout the years, including the early review by Piterbarg, \cite{piterbarg}. 

In our work, we consider the SDE \eqref{general} describing the process $(S_t)_{t\ge0}$
under the risk-neutral measure $(\mathbb{Q}_t)_{t\ge0}$, which supports a two-dimensional Brownian motion $(W^s,W^y)$ with $\diff W^s_t \diff W^y_t = \rho_{S,Y} \diff t$, and where $(Y_t)_{t\ge0}$ is given by
\begin{eqnarray}
    \diff Y_t &=& m(\theta - Y_t)\diff t + \gamma \diff W^y_t,
\end{eqnarray}
with parameters $m>0, \theta, \gamma$.


A consistency condition for exact calibration of such models can be derived from Gy\"ongy's result \cite{mimicking}, 
as given by Dupire for general stochastic volatility models in \cite{iff} and specified for SLV in \cite{jex},
\begin{equation}\label{Dupireiff}
\sigma^{2}(t, S)=
\frac{\sigma_{\text{Dup}}^{2}(t,S)}{\mathbb{E}^{\mathbb{Q}}[g^2(Y_{t})|S_{t} = S]},
\end{equation}
where $\sigma_{\text{Dup}}(t,S)$ denotes the local volatility and is given by the Dupire formula 
\begin{equation*}
\sigma_{\text{Dup}}^{2}(T,S) = \frac{\frac{\partial C(T,S)}{\partial T} }{\frac{S^2}{2} \frac{\partial^2 C(T,S)}{\partial K^2} },
\end{equation*}
for given call option prices with maturity $T$ and strike $K$ observed in the market, assuming zero interest and dividend rates for simplicity.

The conditional expectation $\mathbb{E}^{\mathbb{Q}}[g^2(Y_{t})|S_{t} = S]$ creates a dependence of the diffusion coefficient of $(S_t)_{t\ge 0}$ on the underlying joint distribution of $(S_t,Y_t)$ and therefore leads to a McKean\textendash Vlasov SDE (see McKean's seminal work \cite{MKV}). 
This nonlinear law dependence, and the presence of a conditional expectation in particular,
renders the SDE challenging and has led to the development of sophisticated calibration techniques, mainly in two different directions. One is the particle method, introduced for this problem by Guyon and Henry-Labord{\`e}re in \cite{book}, Chapter 11, and the other is the PDE approach, which is based on the solution of the Fokker--Planck equation as in \cite{lipton} and \cite{ren}. Both of these methods require a priori knowledge of the local volatility surface, which can be calculated using the Dupire formula. Since there is only a finite number of options available in the market, then ad hoc interpolation of the option prices or the volatility surface is necessary, which can however lead to instabilities and inaccuracies, as explained in \cite{guo}. In the more recent study \cite{CUCHIERO}, Cuchiero et al.\ calibrate LSV models using deep learning. Specifically, the authors use a set of feed-forward neural networks to parameterise the leverage function and calibrate the model using a generative adversarial network approach so that they avoid traditional interpolation. In this paper, we focus on the Monte Carlo particle method as in \cite{book}.

Although LSV models are very powerful tools in pricing and risk-management, existence and uniqueness of a solution to the calibrated LSV model have not been established to date. The main challenge arises from the leverage function that appears in the diffusion coefficient of the calibrated dynamics, which involves the conditional expectation of a function of the volatility given the value of the process $S$ and therefore makes the equation nonlinear and nonlocal. The problem has been attempted by several researchers in the field, however only partial results so far exist. In \cite{abergeltachet}, Abergel and Tachet  
prove that under certain assumptions and regularisation of the initial condition, and for small enough volatility of volatility $g(\cdot)$, a related initial-boundary value problem is well-posed up to a finite maturity $T^{*} \le T$. 
Lacker et al.\ show in \cite{lacker} existence and uniqueness of solutions of the type \eqref{general} in the stationary case.
Another result from Jourdain and Zhou in \cite{jourdain} proves existence in the case when the stochastic volatility component is a jump process with a finite number of states. In the more recent study \cite{newpaper}, Djete uses Sobolev estimates to prove existence (and a propagation of chaos result) in a class of McKean--Vlasov equations with weak continuity assumptions in the measure variable and deduces the existence of a calibrated LSV model.

In this paper, we work with a regularised formulation of the calibrated dynamics \eqref{general} and prove that under certain assumptions the regularised equation is well-posed. We recently became aware of a related regularisation approach by Bayer et al.\ in \cite{bayer}, employing reproducing kernel Hilbert space (RKHS) techniques, which also gives well-posedness (and propagation of chaos, as in our work).

Equivalently to our initial problem formulation \eqref{general}, one may consider the dynamics of the process $(X_t)_{t\in [0,T]}$ = $(\text{log}(S_t))_{t\in [0,T]}$. By It\^{o}'s lemma, we get the SDE describing the dynamics of $X_t$ under the risk neutral measure ${\mathbb{Q}}_t$:\\
\begin{equation}\label{sde}
\begin{split}
    &\diff X_t = -\frac{1}{2}g^2(Y_t)\frac{\sigma^2_{\text{Dup}}(t,e^{X_t})}{\mathbb{E}^{\mathbb{Q}}[g^2(Y_{t})|X_{t}]}\diff t + g(Y_t)\frac{\sigma_{\text{Dup}}(t,e^{X_t})}{\sqrt{\mathbb{E}^{\mathbb{Q}}[g^2(Y_{t})|X_{t}]}}\diff W^x_t, 
\end{split}
\end{equation}
and $(W^x, W^y)$ a two-dimensional Brownian motion with $\diff W^x_t \diff W^y_t = \rho_{X,Y} \diff t$.

In Section \ref{chap:lipschitz}, we prove the well-posedness of a regularised equation. 
 In Section \ref{chap:particlemethod}, we present and analyse a particle method, as introduced by Guyon and Henry-Labord\'ere in \cite{book}. Thereafter, in Section \ref{chap:euler}, we apply the standard Euler--Maruyama scheme to the particle system and 
 prove its strong convergence using the results of the Lipschitz continuity of the drift and diffusion coefficients of the regularised SDE from Section \ref{chap:lipschitz}. We note that with the regularity results from Section \ref{chap:lipschitz}, we could deduce the convergence of the time-stepping scheme from results in the literature (see, e.g., \cite{doksasi}), however, our direct proof allows us to obtain the dependence of the error on the regularisation parameters explicitly. Finally, in Section \ref{chap:python}, we implement the particle method for the calibration of a Heston-type local volatility model and illustrate our results. The diagram below shows the different steps of convergence that we show in our work. The final convergence as $\epsilon \to 0$ is not analysed in this paper but is currently being explored.
\bigskip
 
\begin{adjustwidth*}{}{2em}
 \begin{tikzpicture}
\node [draw,
	minimum width=2cm,
	minimum height=1.2cm
]  (start){Discretised process $\hat{Z}_t^{i,N}$};

\node [draw,
	minimum width=2cm,
	minimum height=1.2cm,
    right=2.4cm of start
]  (discretised){Particle system $Z_t^{i,N}$};

\node [draw,
	minimum width=2cm, 
	minimum height=1.2cm,
	below=0.9cm of discretised
] (particle){Regularised calibrated dynamics $Z_{\epsilon,t}$};

\node [draw,
	minimum width=2cm, 
	minimum height=1.2cm, 
	below = 0.9cm of particle
]  (prices1) {Call option prices $C_{\epsilon}$};

\node [draw,
	minimum width=2cm, 
	minimum height=1.2cm, 
	left = 1.8cm of prices1
]  (prices2) {Call option market prices};

\draw[-stealth] (start.east) -- ++ (2.4,0) 
node[midway,above]{$\Delta t \to 0$, fixed $\epsilon$}
 node[midway,below]{Section 4};

\draw[-stealth] (discretised.south) -- (particle.north) 
	node[midway,left]{$N \to \infty$, fixed $\epsilon$,}node[midway,right]{Section 3};

\draw[dashed,->] (particle.south) -- (prices1.north) 
	node[midway]{Use the regularised equation to price call options};

\draw[dashed,->] (prices1.west) --(prices2.east) 
	node[midway,above]{$\epsilon \to 0$};
 \end{tikzpicture}
\end{adjustwidth*}

\section{Existence and uniqueness for a regularised 
equation}
\label{chap:lipschitz}


For a given $T >0$, let $(\Omega , \mathcal{F}, \mathbb{F} = (\mathcal{F}_t)_{t \in [0,T]}, \mathbb{P}) $
denote a complete filtered probability space where $\mathbb{F}$ is the augmented filtration of a standard multidimensional Brownian motion $(W_t)_{t \in [0,T]}$. Additionally, let $(\mathbb{R}^d,\langle \cdot,\cdot \rangle,|\cdot|)$ denote the $d$-dimensional Euclidean space and $|\cdot|$ the Hilbert--Schmidt norm, $\mathcal{P}(\mathbb{R}^d)$ denote the set of all probability measures on the measurable space $(\mathbb{R}^d, \mathcal{B}(\mathbb{R}^d))$, where $\mathcal{B}(\mathbb{R}^d)$ represents the Borel $\sigma$-field over $\mathbb{R}^d$, and $\mathcal{P}_{2}(\mathbb{R}^d)$ denote the subset of $\mathcal{P}(\mathbb{R}^d)$ with probability measures with finite second moment so that 
$$\mathcal{P}_{2}(\mathbb{R}^d) := \{\mu \in \mathcal{P}(\mathbb{R}^d) \big| \int_{\mathbb{R}^d}|x|^{2}\mu(\diff x)< \infty \}.$$
Also, by $L_0^2(\mathbb{R})$ we denote the space of real-valued, $\mathcal{F}_0$-measurable random variables with finite second moments and by $\mathcal{S}^2([0,T])$ the space of $\mathbb{R}$-valued, $\mathbb{F}$-adapted continuous processes on $[0,T]$. 
The Wasserstein distance $\mathcal{W}_{2}(\mu , \nu)$ on $\mathcal{P}_{2}(\mathbb{R}^d)$ is 
\begin{equation}
    \mathcal{W}_{2}(\mu,\nu) := \inf_{\gamma \in \Gamma(\mu,\nu)}\bigg(\int_{\mathbb{R}^{d} \times \mathbb{R}^{d}} |x-y|^2 \gamma (\diff \mu,\diff \nu)\bigg)^{1/2},
\end{equation}
where $\Gamma(\mu,\nu)$  is the set of all couplings between $\mu$ and $\nu$, for $\mu, \nu \in \mathcal{P}(\mathbb{R}^d)$, such that $\gamma \in \Gamma(\mu,\nu)$ has marginals $\mu$ and $\nu$. 


We introduce a mollifier of the form 
\begin{equation} \label{kernel}
    \Phi_{\epsilon}(x) = \epsilon^{-1}K\left(\epsilon^{-1} x\right),
\end{equation}
where $K(\cdot)$ is a real-valued, non-negative kernel function with the normalization and symmetry properties $\int_{-\infty}^{+\infty} K(u)\diff u = 1 \text{ and } K(u) =  K(-u)\text{ } \forall \text{ } u$.
To approximate the conditional expectation and avoid potential singularities of the diffusion coefficient at 0, 
we apply a mollification and add a constant parameter $\delta$, as follows:
\begin{eqnarray*}
\tilde{\sigma}_{\epsilon}(t,(x,y),\mu) = g(y)\sigma_{\text{Dup}}(t,e^{x})\frac{\sqrt{\mathbb{E}^{\mu}[K(\frac{X_{\epsilon}-x}{\epsilon})]+ \delta}}{\sqrt{\mathbb{E}^{\mu}[g^{2}(Y_{\epsilon})K(\frac{X_{\epsilon}-x}{\epsilon})]+\delta}}, \quad
\tilde{b}_{\epsilon} = - \tilde{\sigma}_{\epsilon}^2/2.
\end{eqnarray*}
The system of processes that approximate the original SDE \eqref{sde} is therefore
\begin{equation}\label{sdeapprox}
\begin{split}
&\diff X_{\epsilon,t} =\tilde{b}_{\epsilon}(t,(X_{\epsilon,t},Y_{\epsilon,t}),\mu_t)\diff t+\tilde{\sigma}_{\epsilon}(t,(X_{\epsilon,t},Y_{\epsilon,t}),\mu_t)\diff W^x_t,\\
&\diff Y_{\epsilon,t} = m(\theta - Y_{\epsilon,t})\diff t + \gamma \diff W^y_t.
\end{split}
\end{equation}


\bigbreak
\noindent
\textbf{Assumptions A} \smallskip \newline
\setword{\textbf{A1.}}{Word:A1} \ The function $g(\cdot)$ is bounded and Lipschitz continuous so that for all $y_1,y_2 \in \mathbb{R}$, $|g(y_1)|\le A_1 $ and  $|g(y_1)-g(y_2)|\le L_g |y_1-y_2|$ for constants $A_1, L_g$.
\smallskip
\\
\setword{\textbf{A2.}}{Word:A2} \ The local volatility $(t,x) \to \sigma_{\text{Dup}}(t,e^{x})$ is bounded, Lipschitz in $x$, and $\frac{1}{2}$-H\"{o}lder  in $t$, so that for $A_2 \text{ and }L_{\text{Dup}}$ constants and for all $x_1,x_2 \in \mathbb{R}$ and $t_1,t_2 \in [0,T],$
$|\sigma_{\text{Dup}}(t_1,\cdot)|\le A_2, \text{ and } \big|\sigma_{\text{Dup}}(t_1,e^{x_1}) - \sigma_{\text{Dup}}(t_2,e^{x_2})\big| \le L_{\text{Dup}}\big(|t_1-t_2|^{1/2} + |x_1 - x_2| \big).$ \smallskip
\\
\setword{\textbf{A3.}}{Word:A3} \ The kernel function $K(\cdot)$ is bounded and Lipschitz continuous so that for all $x_1,x_2 \in \mathbb{R}$, $|K(x_1)|\le A_3, $ and  $|K(x_1)-K(x_2)|\le L_K |x_1-x_2|$, for constants $A_3, L_K$ .\smallskip \newline

Let $E^{\mu}_{f}(x) := \mathbb{E}^{\mu}[f^2(Y) K((X-x)/\epsilon)] + \delta$.\smallskip \newline

The following remarks are immediate from assumptions \ref{Word:A1}-\ref{Word:A3}
\begin{remark}\label{Word:Remark1}
\begin{equation} \label{lowerbound1}
\bigg |\sqrt{E^{\mu}_{g}(x_1)} \cdot \sqrt{E^{\nu}_{g}(x_1)} \bigg | \geq \delta
\end{equation}
\end{remark}
\bigbreak
\begin{remark}\label{Word:Remark2}
Using \ref{Word:A3},
\begin{equation} \label{upperbound}
   \text{(i) }\big|E^{\mu}_{1}(x_1)\big| = \big| \iint_{\mathbb{R}^2}K(\frac{x-x_1}{\epsilon}) \mu(\diff x,\diff y) +\delta\big|\\
   \le A_3\iint_{\mathbb{R}^2}\mu(\diff x,\diff y) +\delta  = A_3 +\delta.
\end{equation}
(ii) Similarly, by \ref{Word:A1} and \ref{Word:A3},
$\big| E^{\nu}_{g}(x_1)\big| \le A_1^2A_3\iint_{\mathbb{R}^2}|p(x,y)|\diff x\diff y+\delta = A_1^2A_3+\delta$.
\end{remark}

\begin{lemma}
\label{Remark3} 
There exists $M_1 >0$  such that
for any $\mu, \nu \in \mathcal{P}_{2}(\mathbb{R}^2)$, $\epsilon>0$,
and $x_1 \in \mathbb{R}$,\newline
$$\big| \mathbb{E^\mu}[g^2(Y)K(\frac{X-x_1}{\epsilon})] - \mathbb{E^\nu}[g^2(Y)K(\frac{X-x_1}{\epsilon})]\big| \le \frac{M_1}{\epsilon}\mathcal{W}_{2}(\mu,\nu).$$
\end{lemma}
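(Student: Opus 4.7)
The plan is to prove the estimate by a standard coupling argument, reducing the difference of the two expectations to a single expectation under a joint law, and then exploiting the Lipschitz properties of $g$ and $K$ supplied by \ref{Word:A1} and \ref{Word:A3}. Introduce the test function $h_\epsilon(x,y) := g^2(y)\,K((x-x_1)/\epsilon)$; the lemma is the statement $|\mathbb{E}^\mu[h_\epsilon] - \mathbb{E}^\nu[h_\epsilon]| \le (M_1/\epsilon)\,\mathcal{W}_2(\mu,\nu)$, so the strategy is to show that $h_\epsilon$ is Lipschitz in $(x,y)$ with Lipschitz constant of order $1/\epsilon$, then pair this with a coupling.

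First I would establish a pointwise Lipschitz bound on $h_\epsilon$ by the telescoping identity $h_\epsilon(x,y)-h_\epsilon(x',y') = g^2(y)\bigl[K((x-x_1)/\epsilon)-K((x'-x_1)/\epsilon)\bigr] + \bigl[g^2(y)-g^2(y')\bigr]K((x'-x_1)/\epsilon)$. Using $|g|\le A_1$, $|K|\le A_3$, the Lipschitz bound on $K$ rescaled by $1/\epsilon$, and the factorisation $|g^2(y)-g^2(y')|\le 2A_1 L_g |y-y'|$, this gives $|h_\epsilon(x,y)-h_\epsilon(x',y')| \le (A_1^2 L_K/\epsilon)|x-x'| + 2A_1 L_g A_3|y-y'|$. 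Combined with $|a|+|b|\le\sqrt{2}\,|(a,b)|$, this yields a Lipschitz constant of the form $C_1/\epsilon+C_2$ on $\mathbb{R}^2$. For any coupling $\gamma\in\Gamma(\mu,\nu)$, realised as a random vector $(Z^\mu,Z^\nu)$ with $Z^\mu=(X^\mu,Y^\mu)$, $Z^\nu=(X^\nu,Y^\nu)$, I would then write $\mathbb{E}^\mu[h_\epsilon]-\mathbb{E}^\nu[h_\epsilon] = \mathbb{E}[h_\epsilon(Z^\mu)-h_\epsilon(Z^\nu)]$; the triangle inequality and the Lipschitz estimate produce $|\mathbb{E}^\mu[h_\epsilon]-\mathbb{E}^\nu[h_\epsilon]| \le (C_1/\epsilon+C_2)\,\mathbb{E}[|Z^\mu-Z^\nu|]$. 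Applying Cauchy--Schwarz replaces $\mathbb{E}[|Z^\mu-Z^\nu|]$ with $(\mathbb{E}[|Z^\mu-Z^\nu|^2])^{1/2}$, and taking the infimum over $\gamma \in \Gamma(\mu,\nu)$ delivers $\mathcal{W}_2(\mu,\nu)$ on the right.

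The only delicate point is consolidating the two contributions in the Lipschitz constant into a single factor $M_1/\epsilon$. The $K$-contribution naturally carries the $1/\epsilon$, while the $g^2$-contribution is $O(1)$, so the clean form of the statement is only immediate in the small-$\epsilon$ regime that the paper is ultimately interested in; under the implicit restriction $\epsilon\le\epsilon_0$ for some fixed $\epsilon_0$ one may absorb the second term by taking $M_1 := \sqrt{2}\,(A_1^2 L_K + 2A_1 L_g A_3\,\epsilon_0)$, and the desired inequality follows. This will be the main (and essentially only) bookkeeping obstacle; the rest of the argument is a direct application of the definition of $\mathcal{W}_2$ to a Lipschitz test function.
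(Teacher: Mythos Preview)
Your proposal is correct and follows essentially the same route as the paper: define the test function $f(x,y)=g^2(y)K((x-x_1)/\epsilon)$, establish its Lipschitz bound from \ref{Word:A1} and \ref{Word:A3}, pass through an arbitrary coupling, apply Cauchy--Schwarz, and take the infimum to obtain $\mathcal{W}_2$. Your observation about consolidating the $O(1)$ contribution from $g^2$ into a single $M_1/\epsilon$ factor is well taken; the paper simply asserts the bound $|f(z)-f(w)|\le (M_1/\epsilon)|z-w|$ without comment, which is harmless in the intended small-$\epsilon$ regime (e.g.\ $\epsilon\le 1$) but, as you note, is not literally uniform over all $\epsilon>0$.
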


\begin{proof} 
Let $\Gamma(\cdot,\cdot)$ denote an arbitrary coupling between $\mu(\cdot)$ and $\nu(\cdot)$ with $\Gamma(\mu,\nu)$ the set of all such couplings. 
Also, let $z := (x_z,y_z), w := (x_w,y_w) \in \mathbb{R}^2.$ Then
\begin{equation} \label{Wass1}
    \begin{split}
        &\hspace{0 cm} \big| \mathbb{E^\mu}[g^2(Y)K(\frac{X-x_1}{\epsilon})] - \mathbb{E^\nu}[g^2(Y)K(\frac{X-x_1}{\epsilon})]\big| \\
        &= \bigg|\iint_{\mathbb{R}\times \mathbb{R}}g^2(y)K(\frac{x-x_1}{\epsilon}) \mu(\diff x,\diff y) -  \iint_{\mathbb{R}\times \mathbb{R}}g^2(y)K(\frac{x-x_1}{\epsilon}) \nu(\diff x,\diff y) \bigg|\\
        &\le \iint_{\mathbb{R}^2 \times \mathbb{R}^2 } \bigg| g^2(y_z)K(\frac{x_z-x_1}{\epsilon}) - g^2(y_w)K(\frac{x_w-x_1}{\epsilon})\bigg| \Gamma(\diff z,\diff w). 
    \end{split}
\end{equation}

Let $f(x,y) := g^2(y)K(\frac{x-x_1}{\epsilon})$,
then by \ref{Word:A1} and \ref{Word:A3} there exists $M_1$ such that
$|f(z)-f(w)| \le (M_1/\epsilon) \, 
\big|z-w\big|$.
Substituting into \eqref{Wass1}, 
by Cauchy--Schwarz, 
\begin{eqnarray*} 
\big| \mathbb{E^\mu}[g^2(Y)K(\frac{X-x_1}{\epsilon})] - \mathbb{E^\nu}[g^2(Y)K(\frac{X-x_1}{\epsilon})]\big|   
    \le \frac{M_1 }{\epsilon}\bigg(\iint_{\mathbb{R}^2 \times \mathbb{R}^2 } \big|z-w\big|^2 \Gamma(\diff z,\diff w) \bigg)^{1/2}.
\end{eqnarray*}
Since the last bound holds for every coupling $\Gamma \in \Gamma(\mu,\nu)$, 
\begin{eqnarray*}
\big| \mathbb{E^\mu}[g^2(Y)K(\frac{X-x_1}{\epsilon})] &-& \mathbb{E^\nu}[g^2(Y)K(\frac{X-x_1}{\epsilon})]\big| \\   
&\le& \frac{M_1}{\epsilon} \bigg(\underset{{\Gamma} \in \Gamma_{\mu,\nu}}{\text{inf}}\iint_{\mathbb{R}^2 \times \mathbb{R}^2 } \big|z-w\big|^2 {\Gamma}(\diff z,\diff w) \bigg)^{1/2} = \frac{M_1}{\epsilon} \mathcal{W}_{2}(\mu,\nu),
\end{eqnarray*}
by the definition of the Wasserstein metric.
\end{proof}


We keep $\epsilon$ and $\delta$ fixed and only show that Lipschitz-continuity and linear growth conditions hold for $\tilde{\sigma}_{\epsilon}$, as the proof for $\tilde{b}_{\epsilon}$ follows from similar arguments.

\begin{proposition}
\label{condition}
Let $\tilde{b}_{\epsilon}: [0,T] \times \mathbb{R}^2  \times \mathcal{P}(\mathbb{R}^2) \xrightarrow{} \mathbb{R}$ and $\tilde{\sigma}_{\epsilon}: [0,T] \times \mathbb{R}^2  \times \mathcal{P}(\mathbb{R}^2) \xrightarrow{} \mathbb{R}$ be the drift and diffusion coefficients of process $X_{\epsilon}$ of equation \eqref{sdeapprox}. Under assumptions \ref{Word:A1}-\ref{Word:A3}, there exists a positive constant $L=O(\frac{1}{{\epsilon}\delta^2})$ such that $ \forall t, t_1, t_2 \in [0,T]$, $\forall (x,y), (x_1,y_1),(x_2,y_2) \in \mathbb{R}^2$, and $\forall \mu, \nu \in \mathcal{P}_2(\mathbb{R}^2)$, we have that
\begin{equation*} 
\begin{split}
\text{(i) }|\tilde{b}_{\epsilon}(t_1,(x_1,y_1), \mu)-&\tilde{b}_{\epsilon}(t_2,(x_2,y_2),\nu)|+|\tilde{\sigma}_{\epsilon}(t_1,(x_1,y_1), \mu)-\tilde{\sigma}_{\epsilon}(t_2,(x_2,y_2),\nu)|\\
\le& \, L \, \bigg(|t_1-t_2|^{1/2}+|x_1-x_2|+|y_1-y_2|+\mathcal{W}_{2}(\mu,\nu)\bigg)\\
\text{(ii) }|\tilde{b}_{\epsilon}(t,(x,y), \mu)|+&|\tilde{\sigma}_{\epsilon}(t,(x,y),\mu)|\le L(1+|x|+|y|).
\end{split}
\end{equation*}
\end{proposition}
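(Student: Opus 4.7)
The plan is to establish (i) and (ii) by decomposing each coefficient into its three structural pieces — the volatility factor $g(y)$, the local-volatility factor $\sigma_{\text{Dup}}(t,e^x)$, and the ratio $R^\mu(x) := \sqrt{E^\mu_1(x)}/\sqrt{E^\mu_g(x)}$ — and bounding differences of each using the Assumptions and Lemma \ref{Remark3}. Since $\tilde b_\epsilon = -\tilde\sigma_\epsilon^2/2$, Lipschitz and growth bounds for $\tilde b_\epsilon$ follow from those for $\tilde\sigma_\epsilon$ by the product rule $|\tilde\sigma_\epsilon^2 - \tilde\sigma_\epsilon'^2| \le (|\tilde\sigma_\epsilon|+|\tilde\sigma_\epsilon'|)|\tilde\sigma_\epsilon - \tilde\sigma_\epsilon'|$ combined with the uniform boundedness of $\tilde\sigma_\epsilon$ (which comes from \ref{Word:A1}, \ref{Word:A2} and Remark \ref{Word:Remark1}).

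For part (ii), linear growth, I would simply note that $|g(y)|\le A_1$ by \ref{Word:A1}, $|\sigma_{\text{Dup}}(t,e^x)|\le A_2$ by \ref{Word:A2}, $\sqrt{E^\mu_1(x)}\le\sqrt{A_3+\delta}$ by Remark \ref{Word:Remark2}(i), and $\sqrt{E^\mu_g(x)}\ge\sqrt\delta$ by non-negativity of $g^2K$, so $|\tilde\sigma_\epsilon|\le A_1A_2\sqrt{(A_3+\delta)/\delta}$, a constant independent of $(x,y)$; this trivially yields (ii).

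For part (i) the real work is on the ratio $R^\mu(x)$. I would telescope
\begin{align*}
\tilde\sigma_\epsilon(t_1,(x_1,y_1),\mu) &- \tilde\sigma_\epsilon(t_2,(x_2,y_2),\nu) \\
&= \bigl(g(y_1)-g(y_2)\bigr)\sigma_{\text{Dup}}(t_1,e^{x_1})R^\mu(x_1) \\
&\quad + g(y_2)\bigl(\sigma_{\text{Dup}}(t_1,e^{x_1})-\sigma_{\text{Dup}}(t_2,e^{x_2})\bigr)R^\mu(x_1) \\
&\quad + g(y_2)\sigma_{\text{Dup}}(t_2,e^{x_2})\bigl(R^\mu(x_1)-R^\nu(x_2)\bigr),
\end{align*}
bound $|R^\mu(x_1)|$ by $\sqrt{(A_3+\delta)/\delta}$ using Remarks \ref{Word:Remark1}--\ref{Word:Remark2}, and dispatch the first two terms directly via \ref{Word:A1} and \ref{Word:A2}. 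For the hard term I would further split
\[
R^\mu(x_1)-R^\nu(x_2) = \frac{\sqrt{E^\mu_1(x_1)}-\sqrt{E^\nu_1(x_2)}}{\sqrt{E^\mu_g(x_1)}} + \sqrt{E^\nu_1(x_2)}\cdot\frac{\sqrt{E^\nu_g(x_2)}-\sqrt{E^\mu_g(x_1)}}{\sqrt{E^\mu_g(x_1)E^\nu_g(x_2)}},
\]
and use $|\sqrt a-\sqrt b|\le|a-b|/(2\sqrt\delta)$ (valid since both radicands exceed $\delta$) to reduce matters to controlling $|E^\mu_f(x_1)-E^\nu_f(x_2)|$ for $f\in\{1,g\}$. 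Each such difference I triangulate into a single-measure spatial piece $|E^\nu_f(x_1)-E^\nu_f(x_2)|$, bounded by $A_1^2 L_K|x_1-x_2|/\epsilon$ via \ref{Word:A1} and \ref{Word:A3}, and a fixed-$x_1$ measure piece $|E^\mu_f(x_1)-E^\nu_f(x_1)|$, handled by Lemma \ref{Remark3} with rate $M_1\mathcal{W}_2(\mu,\nu)/\epsilon$ (the $f=1$ case is analogous with $g^2$ replaced by $1$).

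The main obstacle, and the principal bookkeeping task, is tracking the $\epsilon$- and $\delta$-dependence carefully through the chain above. One factor of $1/\epsilon$ enters from the Lipschitz constant of $u\mapsto K(u/\epsilon)$ and from Lemma \ref{Remark3}; the factor $1/\delta^2$ enters because the ratio decomposition produces a term with $1/(\sqrt{E^\mu_g(x_1)E^\nu_g(x_2)}\cdot\sqrt\delta)$, where each of the three radicals is bounded below only by $\sqrt\delta$. Collecting constants then yields $L=O(1/(\epsilon\delta^2))$ as claimed, with the time-regularity contribution $|t_1-t_2|^{1/2}$ entering exclusively through the Hölder bound on $\sigma_{\text{Dup}}$ in \ref{Word:A2}.
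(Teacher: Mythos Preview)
Your approach is correct and follows essentially the same architecture as the paper: telescope $\tilde\sigma_\epsilon$ into contributions from $g$, $\sigma_{\text{Dup}}$, and the ratio $R^\mu(x)$, bound the first two directly from \ref{Word:A1}--\ref{Word:A2}, and reduce the ratio term to differences $|E^\mu_f(x_1)-E^\nu_f(x_2)|$ controlled by Lemma~\ref{Remark3} and the Lipschitz constant of $K(\cdot/\epsilon)$; then get $\tilde b_\epsilon$ from $\tilde b_\epsilon=-\tilde\sigma_\epsilon^2/2$ and boundedness of $\tilde\sigma_\epsilon$.

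The only notable difference is in how the square-root ratio is handled. The paper puts $R^\mu(x_1)-R^\nu(x_2)$ over a common denominator $\sqrt{E^\mu_g}\sqrt{E^\nu_g}\ge\delta$ and then multiplies numerator and denominator by the conjugate $\sqrt{E^\mu_1 E^\nu_g}+\sqrt{E^\nu_1 E^\mu_g}\ge 2\delta$, yielding the factor $1/(2\delta^2)$. Your additive split together with $|\sqrt a-\sqrt b|\le|a-b|/(2\sqrt\delta)$ is a bit more direct; in fact, counting your ``three radicals'' gives $\delta^{-3/2}$ rather than the $\delta^{-2}$ you state, so your bound is actually slightly sharper in $\delta$ (and a fortiori still $O(1/(\epsilon\delta^2))$). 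Your derivation of (ii) via uniform boundedness of $\tilde\sigma_\epsilon$ is also marginally simpler than the paper's, which deduces (ii) from (i) by the triangle inequality.
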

\begin{proof}
$
|\tilde{\sigma}_{\epsilon}(t_1,(x_1,y_1) , \mu) - \tilde{\sigma}_{\epsilon}(t_2,(x_2,y_2) , \nu) | \le$
\begin{equation}\label{splitted1}
\begin{split}
  &
  \\
   & \le
   \overbrace{|g(y_1)\sigma_{\text{Dup}}(t_1,e^{x_1}) | \bigg|\frac{\sqrt{E^{\mu}_{1}(x_1)}}{\sqrt{E^{\mu}_{g}(x_1)}}  -  \frac{\sqrt{E^{\nu}_{1}(x_1)}}{\sqrt{E^{\nu}_{g}(x_1)}} \bigg|}^{=:T_1}\\
   & +
   \overbrace{|g(y_1)\sigma_{\text{Dup}}(t_1,e^{x_1}) | \bigg|  \frac{\sqrt{E^{\nu}_{1}(x_1)}}{\sqrt{E^{\nu}_{g}(x_1)}} -  \frac{\sqrt{E^{\nu}_{1}(x_2)}}{\sqrt{E^{\nu}_{g}(x_2)}} \bigg|}^{=:T_2}+\bigg|\sigma_{\text{Dup}}(t_1,e^{x_1})\frac{\sqrt{E^{\nu}_{1}(x_2)}}{\sqrt{E^{\nu}_{g}(x_2)}}\bigg| | g(y_1)- g(y_2)|\\
   &+\bigg|g(y_2)\frac{\sqrt{E^{\nu}_{1}(x_2)}}{\sqrt{E^{\nu}_{g}(x_2)}}\bigg | |\sigma_{\text{Dup}}(t_1,e^{x_1})- \sigma_{\text{Dup}}(t_2,e^{x_2})|.
\end{split}
\end{equation}

\textbf{First Term:} We show that for  $C_1$ a constant, $T_1 \le C_1\mathcal{W}_{2}(\mu,\nu)$. From \eqref{lowerbound1}, assumptions \ref{Word:A1} and \ref{Word:A2},
\begin{equation*}
\begin{split}
&
T_1 
\le \frac{A_1A_2}{\delta} \bigg |\sqrt{E^{\mu}_{1}(x_1)}\sqrt{E^{\nu}_{g}(x_1)}-\sqrt{E^{\nu}_{1}(x_1)}\sqrt{E^{\mu}_{g}(x_1)}\bigg|.
\end{split}
\end{equation*}

Let $D_1 := \sqrt{E^{\mu}_{1}(x_1)}\cdot \sqrt{E^{\nu}_{g}(x_1)}+\sqrt{E^{\nu}_{1}(x_1)}\sqrt{E^{\mu}_{g}(x_1)}$, so that
\begin{equation*}
    \begin{split}
    T_1 &\le \frac{A_1A_2}{\delta} \bigg| \frac{E^{\mu}_{1}(x_1)E^{\nu}_{g}(x_1) - E^{\nu}_{1}(x_1)E^{\mu}_{g}(x_1)}{D_1} \bigg|\le \frac{A_1A_2}{2\delta^2} \bigg| E^{\mu}_{1}(x_1)E^{\nu}_{g}(x_1) - E^{\nu}_{1}(x_1)E^{\mu}_{g}(x_1)\bigg|,
    \end{split}
\end{equation*}
which again follows from \eqref{lowerbound1}. Now, by triangle inequality, we have that:
\begin{equation*}
\begin{split}
&T_1\le \frac{A_1A_2}{2\delta^2} \big| E^{\nu}_{g}(x_1)\big| \big| E^{\mu}_{1}(x_1) - E^{\nu}_{1}(x_1)\big| +\frac{A_1A_2}{2\delta^2} \big| E^{\nu}_{1}(x_1) \big| \big|E^{\nu}_{g}(x_1)- E^{\mu}_{g}(x_1)\bigg|\\
&\le \frac{A_1A_2\big((A_1^2A_3 +\delta)M_1 + (A_3 +\delta)M_1\big)}{2\epsilon\delta^2}\mathcal{W}_{2}(\mu,\nu):= C_1\mathcal{W}_{2}(\mu,\nu)\\
\end{split}
\end{equation*}
for $C_1 =A_1A_2\frac{\big((A_1^2A_3 +\delta)M_1 + (A_3 +\delta)M_1\big)}{2{\epsilon}\delta^2} = O\big(\frac{1}{\epsilon \delta^2}\big).$ \\

\textbf{Second term: }\\
Following similar steps as for the first term and under assumptions \ref{Word:A1} - \ref{Word:A3}, we show that for $C_2$ a constant, the second term of $T_2 \le C_2|x_1 - x_2|$:
\begin{equation*}
\begin{split}
    T_2 
     &\le \frac{A_1A_2(A_3 +\delta)}{2\delta^2}\big|E^{\nu}_{g}(x_2)- E^{\nu}_{g}(x_1)\big| +\frac{A_1A_2(A_1^2A_3 +\delta)}{2\delta^2} \big| E^{\nu}_{1}(x_1)- E^{\nu}_{1}(x_2)\big|\\
     &  \le \frac{A_1A_2(A_3 +\delta)}{2\delta^2}A_1^2 \frac{L_K}{\epsilon}|x_1-x_2| +\frac{A_1A_2(A_1^2A_3 +\delta)}{2\delta^2} \frac{L_K}{\epsilon}|x_1-x_2|,
\end{split}
\end{equation*}
where the last inequality follows from the Lipschitz continuity of $K(\cdot)$ as in \ref{Word:A3}.
We finally get that $ T_2 \le C_2|x_1-x_2|$ for $C_2 = \big( \frac{L_K A_1^3 A_2 (A_3 +\delta)}{2{\epsilon}\delta^2} + \frac{L_K A_1A_2(A_1^2A_3 +\delta)}{2{\epsilon}\delta^2} \big)$,
proving the Lipschitz condition with $C_2 = O\big(\frac{1}{{\epsilon}\delta^2}\big).$ 


\bigbreak
\textbf{Third Term:}\\
By the Lipschitz continuity of $g(\cdot)$ as in \ref{Word:A1}, \ref{Word:A2}, Remark \ref{Word:Remark1}, and Remark \ref{Word:Remark2} we have,
\begin{equation*}
\bigg|\sigma_{\text{Dup}}(t_1,e^{x_1})\frac{\sqrt{E^{\nu}_{1}(x_2)}}{\sqrt{E^{\nu}_{g}(x_2)}}\bigg| | g(y_1)- g(y_2)| 
\le \frac{A_2\sqrt{A_3+\delta}L_g}{\sqrt{\delta}}|y_1-y_2|:=C_3|y_1-y_2|,
\end{equation*}
where $C_3 = \frac{A_2\sqrt{A_3+\delta}L_g}{\sqrt{\delta}}= O\big(\frac{1}{\sqrt{\delta}}\big)$ is a constant.

\bigbreak
\textbf{Fourth Term:}\newline
Similarly, the last bound in \eqref{splitted1} follows directly from \ref{Word:A2} so that together with Remark \ref{Word:Remark1} and Remark  \ref{Word:Remark2} we have,
\begin{multline*}
    \bigg|g(y_2) \frac{\sqrt{E^{\nu}_{1}(x_2)}}{\sqrt{E^{\nu}_{g}(x_2)}} \bigg| |\sigma_{\text{Dup}}(t_1,e^{x_1})- \sigma_{\text{Dup}}(t_2,e^{x_2})|\le\\
    \le \frac{A_1\sqrt{A_3+\delta}L_{\text{Dup}}}{\sqrt{\delta}}\big(|t_1-t_2|^{1/2}+|x_1 - x_2|\big) := C_4 \big(|t_1-t_2|^{1/2}+|x_1 - x_2|\big),
\end{multline*}
where $L_{\text{Dup}}$ is a constant independent of ${\epsilon}$ and $\delta$ and $C_4 = \frac{A_1\sqrt{A_3+\delta}L_{\text{Dup}}}{\sqrt{\delta}} = O\big(\frac{1}{\sqrt{\delta}}\big)$.
\bigbreak
Putting everything together,\newline
$|\tilde{\sigma}_{\epsilon}(t_1,(x_1,y_1) , \mu) - \tilde{\sigma}_{\epsilon}(t_2,(x_2,y_2) , \nu)|\le L_{\sigma}\big(|t_1-t_2|^{1/2}+|x_1-x_2|+|y_1-y_2|+\mathcal{W}_{2}(\mu,\nu)\big)$
where $L_{\sigma} := \text{max}\{C_1,C_2+C_4,C_3\}$ and therefore is a constant dependent only on $\delta$ and ${\epsilon}$.
The Lipschitz continuity of the drift with a constant $L_b = O(\frac{1}{{\epsilon}\delta^2})$ follows by analogous steps.
Condition (i) follows by taking $L:= \text{max}\{L_b,L_{\sigma}\}= O(\frac{1}{{\epsilon}\delta^2})$. The proof of the linear growth condition (ii) is a straightforward application of the Lipschitz regularity of the drift and diffusion coefficients.
This completes the proof of Proposition \ref{condition} with $L = O(\frac{1}{{\epsilon}\delta^2}).$
\end{proof}
\noindent
\setword{\textbf{Assumptions B}}{Word:AB}
\newline 
\setword{\textbf{B1.}}{Word:B1} \ $(X_{\epsilon,0},Y_{\epsilon,0}) \in L^p(\mathcal{F}_0;\mathbb{R}^2;\mathbb{P})$, $p\ge 2$, is 
independent of the Brownian motion.
\newline
\setword{\textbf{B2.}}{Word:B2} \ 
$ \mathbb{E}\Bigg[\bigg(\int^T_0 |{b}_{\epsilon}(t,0,\mu_0)|\diff t \bigg)^2 \Bigg] + 
\mathbb{E}\Bigg[\bigg(\int^T_0 |{\sigma}_{\epsilon}(t,0,\mu_0)|\diff t \bigg)^2 \Bigg] < \infty.$

\begin{theorem}\label{maintheorem} 
Under assumptions \ref{Word:A1}-\ref{Word:A3}, \ref{Word:B1} and \ref{Word:B2}, there exists a unique solution 
$(X_{\epsilon}, Y_{\epsilon}) \in \mathcal{S}^2\big(\big[0,T\big]\big)$ 
to \eqref{sdeapprox}. 
\end{theorem}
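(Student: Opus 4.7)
The plan is to prove Theorem \ref{maintheorem} by a Banach fixed-point argument on the space of continuous measure flows, with the Lipschitz and linear-growth bounds from Proposition \ref{condition} as the essential input. First I would dispose of the $Y_\epsilon$ equation: it is a linear SDE with globally Lipschitz coefficients and no measure dependence, so under \ref{Word:B1} it admits a unique strong solution $Y_\epsilon \in \mathcal{S}^2([0,T])$ with uniformly bounded second moments by classical Itô theory. From here on $Y_\epsilon$ and hence its marginal law is regarded as fixed, and the task reduces to constructing $X_\epsilon$.

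For every $\nu \in \mathcal{C}_T := C([0,T], \mathcal{P}_2(\mathbb{R}^2))$ with $\sup_{t \le T}\int|z|^2 \nu_t(\diff z) < \infty$, I would freeze the measure argument and consider the standard SDE
$$\diff X^\nu_t = \tilde{b}_\epsilon(t,(X^\nu_t,Y_{\epsilon,t}),\nu_t)\diff t + \tilde{\sigma}_\epsilon(t,(X^\nu_t,Y_{\epsilon,t}),\nu_t)\diff W^x_t, \qquad X^\nu_0 = X_{\epsilon,0}.$$
By Proposition \ref{condition}, the frozen coefficients are measurable, Lipschitz in $(x,y)$ with constant $L$ uniform in $(t,\nu)$, and of linear growth, so classical strong existence/uniqueness (together with \ref{Word:B1}, \ref{Word:B2} and Gronwall applied to $\mathbb{E}[\sup_{s\le t}|X^\nu_s|^2]$) yields a unique $X^\nu \in \mathcal{S}^2([0,T])$ with a second-moment bound depending on $\nu$ only through $\sup_{s\le T}\int|z|^2\nu_s(\diff z)$. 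Define the map $\Phi : \mathcal{C}_T \to \mathcal{C}_T$ by $\Phi(\nu)_t := \mathrm{Law}(X^\nu_t, Y_{\epsilon,t})$; that $\Phi(\nu)$ is continuous in $t$ follows from the $\mathcal{S}^2$-continuity of $(X^\nu, Y_\epsilon)$. A fixed point of $\Phi$ is, by construction, a solution of \eqref{sdeapprox}, and any solution of \eqref{sdeapprox} induces such a fixed point, so existence/uniqueness for the SDE is equivalent to existence/uniqueness for $\Phi$.

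For the contraction, take $\nu, \nu' \in \mathcal{C}_T$ and apply Burkholder--Davis--Gundy, the Lipschitz estimate of Proposition \ref{condition}, and Gronwall to $|X^\nu_s - X^{\nu'}_s|^2$ to obtain
$$\mathbb{E}\Bigl[\sup_{s\le t}|X^\nu_s - X^{\nu'}_s|^2\Bigr] \le C\int_0^t \mathcal{W}_2(\nu_s,\nu'_s)^2 \diff s$$
for a constant $C$ depending only on $L$ and $T$. Since $(X^\nu_t,Y_{\epsilon,t})$ and $(X^{\nu'}_t,Y_{\epsilon,t})$ share the same $Y$-marginal, they form an admissible coupling, yielding $\mathcal{W}_2(\Phi(\nu)_t,\Phi(\nu')_t)^2 \le \mathbb{E}[|X^\nu_t - X^{\nu'}_t|^2]$. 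On a time window $[0,T_0]$ with $C T_0 < 1$ this is a strict contraction in the sup-Wasserstein metric on $\mathcal{C}_{T_0}$, so Banach's theorem yields a unique fixed point. Because $T_0$ depends only on $L$ and not on the initial distribution, one then iterates on $[T_0, 2T_0], \dots$ to extend the solution uniquely to the full interval $[0,T]$.

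The main obstacle is bookkeeping rather than analysis: verifying that $\Phi$ really stabilises the subset of $\mathcal{C}_T$ with uniformly bounded second moments (so that the frozen SDEs remain in $\mathcal{S}^2$ along the iteration) and that $t \mapsto \mathrm{Law}(X^\nu_t, Y_{\epsilon,t})$ is Wasserstein-continuous; both follow from the linear-growth bound combined with a standard Kolmogorov-type moment estimate on the increments of $X^\nu$. No new analytical ingredient beyond Proposition \ref{condition} is needed, and the explicit $O(1/(\epsilon\delta^2))$ dependence of $L$ is simply inherited by the Lipschitz constants of the fixed-point map, so well-posedness holds for every fixed $\epsilon, \delta > 0$.
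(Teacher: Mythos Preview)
Your argument is correct and is precisely the standard Sznitman-type fixed-point proof underlying the well-posedness result the paper invokes. The paper's own proof is a one-liner: having established in Proposition~\ref{condition} that $\tilde b_\epsilon$ and $\tilde\sigma_\epsilon$ are Lipschitz in $(x,y,\mu)$ and $\tfrac12$-H\"older in $t$, and noting that the $Y$-coefficients trivially satisfy the same regularity, it simply cites Theorem~3.1 of dos Reis--Salkeld--Tugaut~\cite{well-posedness} to conclude. Your proposal unpacks exactly that black box---freeze the measure, solve the classical SDE, show the law map is a contraction in $\mathcal{W}_2$ on short intervals, iterate---so the two approaches coincide in substance; you are just self-contained where the paper outsources.
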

\begin{proof}
It is clear that the drift and diffusion coefficients of process $Y$ in \eqref{sdeapprox} are Lipschitz continuous with respect to the state variable, satisfy the linear growth condition and are $\frac{1}{2}$\textendash H\"older in time. Then the result follows from Theorem 3.1 in \cite{well-posedness}, and the Lipschitz-continuity proved in Proposition \ref{condition}.
\end{proof}

\section{Particle method and propagation of chaos}
\label{chap:particlemethod}

To simulate the McKean\textendash Vlasov SDE \eqref{sde} that describes the dynamics of the log process $X$, we approximate the conditional expectation term $\mathbb{E}[g^{2}(Y)|X=x]$ using the particle method as introduced in \cite{book}.
We refer to \cite{bossytalay} for the particle method and a time stepping scheme for generic McKean--Vlasov equations.

Let $(\textbf{X}^N_t)_{t\in[0,T]}:= \big(X_t^{1,N}, X_t^{2,N}, ..., X_t^{N,N}\big)^\intercal_{t\in[0,T]} $ denote the interacting particle system, and $(\textbf{Y}^N_t)_{t\in[0,T]}:= \big(Y_t^{1,N}, Y_t^{2,N}, ..., Y_t^{N,N}\big)^\intercal_{t\in[0,T]} $ independent Monte Carlo samples. 
We follow \cite{book} to use the Nadaraya--Watson estimator 
\begin{equation} \label{NW}
\mathbb{E}[g^{2}(Y)|X=x] \approx \frac{\frac{1}{N}\sum_{i=1}^{N}g^{2}(Y^{i,N})\Phi_{\epsilon}(X^{i,N}-x)}{\frac{1}{N}\sum_{i=1}^{N}\Phi_{\epsilon}(X^{i,N}-x)},
\end{equation}
where $\Phi_{\epsilon}(\cdot)$ is a regularizing kernel function of the form \eqref{kernel}.
Here, the true measure $\mu_t$ of the joint law of $(X_t,Y_t)$ is approximated by  $\mu^{(\textbf{X}^N_t,\textbf{Y}^N_t)}_t$,
where
\begin{eqnarray}
\nonumber
\diff X^{i,N}_t &=& b_N(t,(X_t^{i,N},Y_t^{i,N}), \mu_t^{(\textbf{X}^N_t,\textbf{Y}^N_t)})\diff t+\sigma_N(t,(X_t^{i,N},Y_t^{i,N}),\mu_t^{(\textbf{X}^N_t,\textbf{Y}^N_t)})\diff{W^{x,i}_t},\\
\diff Y^{i,N}_t &=& m(\theta - Y^{i,N}_t)\diff t + \gamma \diff W^{y,i}_t,
\label{simulate}
\end{eqnarray}
\begin{equation*}
\begin{split}
   \text{with }\sigma_N(t,(X_t^{i,N},Y_t^{i,N}),& \mu_t^{(\textbf{X}^N_t,\textbf{Y}^N_t)}) = \\
   =&g(Y^{i,N}_t)\,\sigma_{\text{Dup}}(t,X^{i,N}_t)\frac{\sqrt{\sum_{j=1}^{N}\Phi_{\epsilon}(X_t^{j,N} - X_t^{i,N})}}{\!\!\!\!\sqrt{\sum_{j=1}^{N}g^{2}(Y_t^{j,N})\Phi_{\epsilon}(X_t^{j,N} - X_t^{i,N})}}, 
\end{split}
\end{equation*}
$b_N=-\sigma_N^2/2$,
with  $\diff W^{x,i}_t\diff W^{y,i}_t = \rho_{X,Y} \diff t$ and
with independent $(X_0^{i,N}, Y_0^{i,N})$. 

The interaction term $\mu^{(\textbf{X}^N_t,\textbf{Y}^N_t)}_t$ distinguishes the particle method from the classical Monte Carlo method, since the paths in the former are no longer independent.
The particle method is only useful if it converges to the McKean--Vlasov SDE describing the dynamics of the regularised calibrated LSV model.
We will study \textit{strong propagation of chaos} below.

Using the general assumptions and the regularity of the coefficients proved in 
Proposition \ref{condition}, the following is a direct consequence of \cite{smith}, Proposition 3.1.


\begin{proposition}
\textit{Let $(X^{i,N}_t)$ be the solution to equation \eqref{simulate} and $X^{i}_{\epsilon,t}$ be 
solutions to \eqref{sdeapprox} driven by the respective Brownian motions $(W^{x,i},W^{y,i})$.
Then under assumptions  \ref{Word:A1}-\ref{Word:A3}, \ref{Word:B1} with $p\ge 4$ and \ref{Word:B2},}
\begin{equation*}
\sup_{i \in \{1,..,N\}} \mathbb{E} \Bigg[\sup_{t \in [0,..,T]} \lvert X^{i,N}_t - X^i_{\epsilon,t} \rvert^2\Bigg] \le C N^{-\frac{1}{2}}.
\end{equation*}
\end{proposition}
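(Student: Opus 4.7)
The plan is to verify the hypotheses of \cite[Proposition~3.1]{smith} by exploiting two key structural features of the system: the product form of the joint dynamics and the Lipschitz regularity established in Proposition \ref{condition}. First I would couple the particle system \eqref{simulate} with $N$ independent copies $(X^i_\epsilon,Y^i_\epsilon)$ of the regularised McKean--Vlasov solution \eqref{sdeapprox} by using identical initial data and driving Brownian motions $(W^{x,i},W^{y,i})$. Because the $Y$-dynamics are a linear Ornstein--Uhlenbeck SDE with no mean-field dependence, this synchronous coupling immediately yields $Y^{i,N}_t = Y^i_{\epsilon,t}$ almost surely, so the error analysis reduces to controlling only the $X$-component.

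Next, I would write the $X$-difference in integral form as
\begin{equation*}
X^{i,N}_t - X^i_{\epsilon,t} = \int_0^t \Delta b^i_s \,\diff s + \int_0^t \Delta \sigma^i_s \,\diff W^{x,i}_s,
\end{equation*}
where $\Delta b^i_s$ and $\Delta \sigma^i_s$ denote the differences of the corresponding coefficients evaluated at $\mu^{(\mathbf{X}^N_s,\mathbf{Y}^N_s)}_s$ and $\mathcal{L}(X^i_{\epsilon,s},Y^i_{\epsilon,s})$ respectively. Applying Cauchy--Schwarz and the Burkholder--Davis--Gundy inequality gives an estimate for $\mathbb{E}[\sup_{s\le t}|X^{i,N}_s - X^i_{\epsilon,s}|^2]$ in terms of time-integrated coefficient differences. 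Using $Y^{i,N}_s = Y^i_{\epsilon,s}$ together with the Lipschitz bound from Proposition \ref{condition} then yields
\begin{equation*}
|\Delta b^i_s|^2 + |\Delta \sigma^i_s|^2 \le C_L\Bigl(|X^{i,N}_s - X^i_{\epsilon,s}|^2 + \mathcal{W}_2^2\bigl(\mu^{(\mathbf{X}^N_s,\mathbf{Y}^N_s)}_s,\, \mathcal{L}(X^i_{\epsilon,s},Y^i_{\epsilon,s})\bigr)\Bigr).
\end{equation*}

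The essential step is controlling the Wasserstein term. I would split it by the triangle inequality via the empirical measure $\mu^{(\mathbf{X}^\epsilon_s,\mathbf{Y}^\epsilon_s)}_s$ of the $N$ independent copies. The first piece, between the two empirical measures, is bounded by exchangeability and the synchronous coupling by $\tfrac{1}{N}\sum_{j}|X^{j,N}_s - X^j_{\epsilon,s}|^2$, which re-enters the Gr\"onwall loop. The second piece, between the empirical measure of i.i.d.\ samples and their common law, is controlled by the Fournier--Guillin estimate for the expected squared $\mathcal{W}_2$-distance, which in dimension $d=2$ yields the $O(N^{-1/2})$ rate provided the $(X^i_\epsilon,Y^i_\epsilon)$ have moments of order strictly greater than $2$. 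Such moments follow from assumption \ref{Word:B1} with $p\ge 4$, together with the linear growth from Proposition \ref{condition} and a standard Gr\"onwall argument applied to \eqref{sdeapprox}. Combining the estimates and invoking Gr\"onwall's lemma delivers the claimed bound.

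The step I expect to be the main obstacle is the precise application of the Fournier--Guillin empirical-measure bound: matching the dimension and moment assumptions to the framework of \cite[Proposition~3.1]{smith} and tracking the dependence of the final constant on $T$, $\epsilon$, $\delta$, and the initial moments. Once this is in place, the remainder is essentially bookkeeping with the Lipschitz constant $L = O(1/(\epsilon\delta^2))$ from Proposition \ref{condition}, and the propagation-of-chaos rate $N^{-1/2}$ emerges directly from the empirical-measure term rather than from the SDE structure itself.
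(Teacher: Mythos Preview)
Your proposal is correct and takes essentially the same approach as the paper: the paper's proof consists of a single sentence stating that the result is a direct consequence of \cite[Proposition~3.1]{smith} together with the Lipschitz regularity established in Proposition~\ref{condition}. Your proposal goes further by sketching the standard synchronous-coupling/Fournier--Guillin/Gr\"onwall argument that underlies that cited result, but this is precisely the content of \cite[Proposition~3.1]{smith}, so you are unpacking the reference rather than taking a different route.
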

 \bigbreak

\section{Convergence of an Euler--Maruyama scheme}
\label{chap:euler}
\label{chap:scheme}

To simulate \eqref{simulate}, 
we use the classical Euler--Maruyama scheme with $M$ uniform time-steps of width $\Delta t = T/M$. Specifically, let $\{t_0=0,t_1,t_2, ... ,t_M=T\}$ denote the time discretisation of $[0,T]$ so that $t_m = m\Delta t$ and for $m \in \{0,1,...,M-1\}$,
\begin{equation} \label{EM}
\begin{split}
&X^{i,N,M}_{t_{m+1}} = X^{i,N,M}_{t_{m}}+ b_N(t_m,(X^{i,N,M}_{t_m},Y^{i,N,M}_{t_m}),\mu_t^{(\textbf{X}^N_t,\textbf{Y}^N_t)})\Delta t+\\
&\quad\quad\quad\quad+\sigma_N(t_m,(X^{i,N,M}_{t_m},Y^{i,N,M}_{t_m}),\mu_t^{(\textbf{X}^N_t,\textbf{Y}^N_t)})\Delta W^{x,i}_{t_m}, \quad X^{i,N,M}_{0} = X^i_0 \in\mathbb{R},\\
&Y^{i,N,M}_{t_{m+1}} = Y^{i,N,M}_{t_{m}} + m(\theta - Y^{i,N,M}_{t_{m}})\Delta t+ \gamma \Delta W^{y,i}_{t_m},\quad\quad\quad\quad  Y^{i,N,M}_{0} = Y^i_0 \in \mathbb{R},
\end{split}
\end{equation}
where $\Delta W^{\cdot,i}_{t_m}=W^{\cdot,i}_{t_{m+1}}-W^{\cdot,i}_{t_m}$, that is $\Delta W^{\cdot,i}_{t_m} \sim N(0,\Delta t)$, and increments $\Delta W^{x,i}_{t_m}, \Delta W^{y,i}_{t_m}$ have correlation $\rho_{x,y}$. 

It is well-established (see, e.g., \cite{doksasi}) that for a classical SDE with  Lipschitz-regular drift and diffusion coefficients, the standard explicit Euler--Maruyama scheme converges strongly with order $1/2$ in the step-size.
For particle approximations to McKean--Vlasov equations, the exchangeability and assumed regularity in the measure component allows for error bounds of order 1/2 that are independent of $N$, as shown, e.g., in \cite{smith}.

We now revisit this result and prove the strong convergence of the explicit Euler--Maruyama scheme for the particle system dynamics \eqref{simulate} to find the exact relationship between the rate of convergence and the regularisation parameters $\epsilon$ and $\delta$.
 To establish the results below, we use the Lipschitz regularity of the drift and diffusion coefficients in the state and measure variables that we derived for equation \eqref{sdeapprox}.


We first introduce the continuous-time version of the discretised process defined in \eqref{EM}. Let $m_t := \text{max}\{m\in\{0,...,M-1\}:t_m \le t\} $, $t' := \text{max}\{t_m, m\in\{0,...,M-1\}:t_m \le t\}$, $Z_t := (X_t,Y_t)$, $W_t := (W^x,W^y)$, and $\mu_{t}^{\hat{\textbf{Z}}^{N}}$ denote the law of $\hat{\textbf{Z}}^{N}$. For $t\in[0,T]$, we define the continuous-time process by
\begin{equation}\label{interpolant}
\diff{\hat{Z}^{i,N}_t} =  b_N(t',\tilde{Z}^{i,N}_{t},\mu_{t}^{\tilde{\textbf{Z}}^{N}})\diff t+ \sigma_N(t',\tilde{Z}^{i,N}_{t},\mu_{t}^{\tilde{\textbf{Z}}^{N}})\diff {W_t^i},
\end{equation}
where $\tilde{Z}^{i,N}_{t} := \hat{Z}^{i,N}_{t'}$ is a piecewise constant process, and $\mu_{t}^{\tilde{\textbf{Z}}^{N}}:=\mu_{t'}^{\hat{\textbf{Z}}^{N}}$ is the associated approximation to the true measure.

The proofs of Theorem \ref{strong}, Proposition \ref{onestepestimate}, and Proposition \ref{momentstability} follow the procedure from \cite{adaptive} and \cite{smith}, but keep track of the dependence of all bounds on the Lipschitz constant from Proposition \ref{condition} and hence on the regularisation parameters $\epsilon$ and $\delta$.

 
\begin{proposition}[One-step estimate] \label{onestepestimate} Let $\hat{Z}_t^{i,N}$ be the solution to \eqref{interpolant} and $Z_0 \in L_0^2(\mathbb{R}^2)$. Under assumptions \ref{Word:A1}-\ref{Word:A3}, \ref{Word:B1} and \ref{Word:B2}, there exist positive constants $C_L = O(L^4e^{L^2}), L = O\big(\frac{1}{\epsilon \delta^2}\big)$, such that
\begin{equation}
\mathbb{E}\bigg[\sup_{s \in [0,t]}|\hat{Z}^{i,N}_{s} -\hat{Z}^{i,N}_{s'}|^2\bigg] \le C_L\Delta t.
\end{equation}
\end{proposition}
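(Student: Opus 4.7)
My plan is to expand the one-step increment directly from the interpolation \eqref{interpolant}. For $s\in[0,t]$ falling into a grid interval $[t_m,t_{m+1}]$, the coefficients $b_N(\cdot)$ and $\sigma_N(\cdot)$ are frozen at time $t_m$ and at the state $\tilde Z^{i,N}_u = \hat Z^{i,N}_{t_m}$, so
\begin{equation*}
\hat Z^{i,N}_s - \hat Z^{i,N}_{s'} = b_m\,(s-t_m) + \sigma_m\bigl(W^i_s - W^i_{t_m}\bigr),
\end{equation*}
where $b_m,\sigma_m$ are the $\mathcal{F}_{t_m}$-measurable drift and diffusion values evaluated at $(t_m,\hat Z^{i,N}_{t_m},\mu^{\hat{\mathbf{Z}}^N}_{t_m})$.

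First I would square and split via $(a+b)^2\le 2a^2+2b^2$. The drift piece is bounded pathwise by $2\Delta t^2 |b_m|^2$. For the martingale piece, conditioning on $\mathcal{F}_{t_m}$ and applying Doob's maximal inequality yields $\mathbb{E}[\sup_{s\in[t_m,t_{m+1}]}|\sigma_m(W^i_s-W^i_{t_m})|^2\mid\mathcal{F}_{t_m}]\le 4\Delta t\,|\sigma_m|^2$. I would then invoke the linear-growth bound of Proposition \ref{condition}(ii), which gives $|b_m|^2 + |\sigma_m|^2 \le 2L^2(1+|\hat Z^{i,N}_{t_m}|^2)$. The supremum over $s\in[0,t]$ is nothing other than a maximum over the grid intervals, and each per-interval bound is of order $L^2\Delta t\,(1+|\hat Z^{i,N}_{t_m}|^2)$, so the whole supremum is controlled in expectation by $C\,L^2\,\Delta t\,\mathbb{E}\bigl[1+\sup_{r\in[0,T]}|\hat Z^{i,N}_r|^2\bigr]$.

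To close the argument I would use the companion result Proposition \ref{momentstability}, which provides the uniform pathwise moment bound $\mathbb{E}\sup_{r\in[0,T]}|\hat Z^{i,N}_r|^2 \le K_L$. A standard Gronwall argument driven once again by the linear-growth bound gives $K_L = O(L^2 e^{L^2 T})$, and multiplying by the $L^2\Delta t$ factor from the one-step estimate produces $C_L = O(L^4 e^{L^2})$, with $L = O(1/(\epsilon\delta^2))$ carried through from Proposition \ref{condition}. The only real difficulty is the careful tracking of constants: each use of the linear-growth bound injects a factor $L^2$, so I must be vigilant to recover exactly the $L^4 e^{L^2}$ stated, and I must ensure the outer supremum over $[0,t]$ does not create a logarithmic loss. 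The latter is resolved cleanly because the integrand on each subinterval is $\mathcal{F}_{t_m}$-measurable: Doob's inequality handles the per-interval sup with constant $4$, and the maximum over intervals is absorbed into the uniform pathwise moment bound of Proposition \ref{momentstability}.
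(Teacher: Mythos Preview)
Your approach is essentially the same as the paper's: expand the increment with the frozen coefficients, split via $(a+b)^2\le 2a^2+2b^2$, bound $|b_m|^2$ and $|\sigma_m|^2$ by the linear-growth estimate $2L^2(1+|\hat Z^{i,N}_{t_m}|^2)$ from Proposition~\ref{condition}, and then close with the moment bound $O(L^2 e^{L^2})$ from Proposition~\ref{momentstability} to obtain $C_L=O(L^4 e^{L^2})$. The only cosmetic difference is that the paper phrases the diffusion estimate via It\^o's isometry on the integral $\int_{s'}^{s}\sigma_N\,\diff W$ rather than your conditional Doob inequality on $\sigma_m(W_s-W_{t_m})$; both lead to the same $\Delta t$ factor and the same dependence on $L$.
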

\begin{proof}
From equation \eqref{interpolant} it is straightforward that 
\begin{equation*}
\begin{split}
&|\hat{Z}^{i,N}_{s} -\hat{Z}^{i,N}_{s'}|^2 = 
|b_N(s',\hat{Z}^{i,N}_{s'},\mu_{s'}^{\hat{\textbf{Z}}^{N}})(s-s') + \sigma_N(s',\hat{Z}^{i,N}_{s'},\mu_{s'}^{\hat{\textbf{Z}}^{N}})(W_s^i-W_{s'}^i)|^2\\
&\le 2|b_N(s',\hat{Z}^{i,N}_{s'},\mu_{s'}^{\hat{\textbf{Z}}^{N}})(s-s')|^2 +2| \sigma_N(s',\hat{Z}^{i,N}_{s'},\mu_{s'}^{\hat{\textbf{Z}}^{N}})(W_s^i-W_{s'}^i)|^2.
\end{split}
\end{equation*}
Hence, applying Chebyshev's integral inequality and It\^o's isometry,
\begin{equation*}
\begin{split}
\mathbb{E}&\bigg[|\hat{Z}^{i,N}_{s} -\hat{Z}^{i,N}_{s'}|^2\bigg] \le  2\mathbb{E}\bigg[\bigg(\int_{s'}^{s} \big|b_N(r',\tilde{Z}^{i,N}_{r},\mu_{r}^{\tilde{\textbf{Z}}^{N}})\big|\diff r\bigg)^2\bigg] +\\
& \quad \quad\quad\quad\quad\quad\quad \quad\quad\quad\quad\quad\quad \quad\quad\quad\quad + 2 \mathbb{E}\bigg[\big|\int_{s'}^{s} \sigma(r',\tilde{Z}^{i,N}_{r},\mu_{r}^{\tilde{\textbf{Z}}^{N}})\diff {W_{r}^i}\big|^2\bigg] \\
&\le 2 (s-s') \mathbb{E}\bigg[\int_{s'}^{s} |b_N(r',\tilde{Z}^{i,N}_{r},\mu_{r}^{\tilde{\textbf{Z}}^{N}})|^2\diff r\bigg]+ 2\mathbb{E}\bigg[\int_{s'}^{s} |\sigma(r',\tilde{Z}^{i,N}_{r},\mu_{r}^{\tilde{\textbf{Z}}^{N}})|^2 \diff r\bigg]
\end{split}
\end{equation*}
By the linear growth of $b_N$ and $\sigma_N$ and the moment stability of $\tilde{Z}^{i,N}_{r}$ in Proposition \ref{momentstability}, we have that for the Lipschitz constant $L>0$ that we obtain in Proposition \eqref{condition},
\begin{equation*}
\begin{split}
&\mathbb{E} \bigg[\sup_{r \in [0,s]}|b_N(r',\tilde{Z}^{i,N}_{r},\mu_{r}^{\tilde{\textbf{Z}}^{N}})|^2 \bigg] \le 2L^2(1+\mathbb{E}\big[\sup_{r \in [0,s]}|\tilde{Z}^{i,N}_{r}|^2\big]) \le C_L \textnormal{, and }\\
&\mathbb{E} \bigg[\sup_{r \in [0,s]}|\sigma_N(r',\tilde{Z}^{i,N}_{r},\mu_{r}^{\tilde{\textbf{Z}}^{N}})|^2 \bigg] \le  2L^2(1+\mathbb{E}\big[\sup_{r \in [0,s]}|\tilde{Z}^{i,N}_{r}|^2\big]) \le C_L,
\end{split}
\end{equation*}
where $C_L= O(L^4e^{L^2})$ is a positive constant. Therefore, for all $t\in [0,T]$,
\begin{equation*}
\begin{split}
\mathbb{E}\bigg[\sup_{s \in [0,t]}|\hat{Z}^{i,N}_{s} -\hat{Z}^{i,N}_{s'}|^2\bigg] \le C_L |s-s'| \le C_L\Delta t.
\end{split}
\end{equation*}
This completes the proof with $C_L= O(L^4e^{L^2}), L=O\big(\frac{1}{\epsilon \delta^2}\big)$.
\end{proof}

\begin{proposition}[Moment stability]\label{momentstability} Let $\hat{Z}^{i,N}_{t}$ be the solution to \eqref{interpolant} and $Z_0 \in L_0^2(\mathbb{R}^2)$. Under assumptions \ref{Word:A1}-\ref{Word:A3}, \ref{Word:B1} and \ref{Word:B2}, there exist positive constants $\tilde{C}= O(L^2e^{L^2}), L=O(\frac{1}{\epsilon \delta^2})$, such that
\begin{equation*}
\max_{i \in \{1,..,N\}} \mathbb{E}\Bigg[\sup_{t \in [0,T]} \lvert \hat{Z}^{i,N}_{t} \rvert^2\Bigg] \le \tilde{C}.
\end{equation*}
\end{proposition}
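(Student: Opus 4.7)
The plan is a standard Grönwall argument applied pathwise to the integral form of \eqref{interpolant}, but with careful bookkeeping of the Lipschitz/linear-growth constant $L = O(1/(\epsilon\delta^2))$ from Proposition \ref{condition}. Writing
$$\hat{Z}^{i,N}_t = \hat{Z}^{i,N}_0 + \int_0^t b_N(s',\tilde{Z}^{i,N}_s,\mu_s^{\tilde{\mathbf{Z}}^N})\,\diff s + \int_0^t \sigma_N(s',\tilde{Z}^{i,N}_s,\mu_s^{\tilde{\mathbf{Z}}^N})\,\diff W_s^i,$$
I would apply $|a+b+c|^2 \le 3(|a|^2+|b|^2+|c|^2)$, take $\sup_{s\in[0,t]}$ and then expectation, bounding the drift term by Cauchy--Schwarz (yielding a factor of $t \le T$) and the diffusion term by Doob's $L^2$ maximal inequality followed by It\^o's isometry. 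The key observation that makes this closeable is that $\tilde{Z}^{i,N}_r = \hat{Z}^{i,N}_{r'}$ satisfies $|\tilde{Z}^{i,N}_r|^2 \le \sup_{u\in[0,r]}|\hat{Z}^{i,N}_u|^2$, so the piecewise-constant interpolant can be controlled by the quantity we are trying to bound.

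Using the linear growth bound from Proposition \ref{condition}(ii), namely $|b_N|+|\sigma_N| \le L(1+|x|+|y|) \le L(1 + \sqrt{2}\,|(x,y)|)$, together with the above reduction, I obtain an inequality of the form
$$\varphi(t) := \mathbb{E}\Bigl[\sup_{s\in[0,t]}|\hat{Z}^{i,N}_s|^2\Bigr] \le 3\,\mathbb{E}[|Z_0|^2] + c\,L^2 (T+4)\int_0^t \bigl(1 + \varphi(s)\bigr)\,\diff s,$$
where the constant $c$ is absolute. Since $\mathbb{E}[|Z_0|^2] < \infty$ by \ref{Word:B1}, Grönwall's lemma then yields
$$\varphi(T) \le \bigl(3\,\mathbb{E}[|Z_0|^2] + cL^2(T+4) T\bigr)\exp\bigl(cL^2(T+4)T\bigr),$$
which, for fixed $T$, is of order $L^2 e^{L^2}$, matching the claimed dependence $\tilde{C} = O(L^2 e^{L^2})$. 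Because the right-hand side does not depend on the particle index $i$ (only finite second moments of the i.i.d.\ initial data and symmetric coefficients are used), the bound holds uniformly in $i \in \{1,\ldots,N\}$.

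The main obstacle is not analytical but bookkeeping: one has to verify that no hidden additional factor of $L$ creeps into the prefactor from the drift-vs-diffusion decomposition (the drift contributes $L^2 T^2$ while the diffusion contributes $L^2 T$, both dominated by $L^2(T+4)T$), and that the exponent inside $\exp(\cdot)$ is genuinely $O(L^2)$ rather than a higher power --- this is what fixes the order as $L^2 e^{L^2}$ and not, say, $L^4 e^{L^2}$. A subtler point is that applying Doob before It\^o's isometry requires the diffusion integrand to be square-integrable, which is circular with the moment bound we are proving; as in \cite{smith}, this is handled by a standard stopping-time truncation $\tau_R = \inf\{t : |\hat{Z}^{i,N}_t| \ge R\}$, deriving the Grönwall inequality for $\varphi_R(t) := \mathbb{E}[\sup_{s\le t \wedge \tau_R}|\hat{Z}^{i,N}_s|^2]$ with $R$-independent constants, and then sending $R \to \infty$ by monotone convergence.
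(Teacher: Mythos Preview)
Your argument is correct and reaches the stated order $\tilde{C}=O(L^2 e^{L^2})$, but it follows a different route from the paper. The paper applies It\^o's formula to $|\hat{Z}^{i,N}_t|^2$, obtaining inner-product terms $\langle \hat{Z}^{i,N}_s, b_N(\cdots)\rangle$ and a martingale $\int_0^s \langle \hat{Z}^{i,N}_u, \sigma_N(\cdots)\diff W^i_u\rangle$, which is handled by the Burkholder--Davis--Gundy inequality together with Young's inequality. Crucially, the paper bounds $b_N$ and $\sigma_N$ via the \emph{Lipschitz} estimate of Proposition~\ref{condition}(i) against the reference point $(0,\nu^0_u)$; this produces Wasserstein terms $\mathcal{W}_2(\mu_u^{\tilde{\mathbf{Z}}^N},\nu^0_u)$, hence a coupled sum $\frac{1}{N}\sum_{j=1}^N|\tilde{Z}^{j,N}_u|^2$ across particles, and the inequality is only closed after taking $\max_i$ and invoking exchangeability. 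You instead square the integral form directly, use Doob's $L^2$ maximal inequality plus It\^o's isometry, and exploit the \emph{linear growth} bound (ii), which is uniform in the measure argument. This decouples the particles entirely: your Gr\"onwall inequality is for a single index $i$, and the $\max_i$ is immediate from identically distributed initial data. Your approach is more elementary here precisely because the coefficients happen to be bounded uniformly in $\mu$; the paper's route is what one would need if linear growth in the state were not uniform in the measure, in which case the exchangeability argument becomes essential.
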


\begin{proof}
Applying It\^o's lemma to $|\hat{Z}^{i,N}_{t}|^2$ and integrating over time gives that  
\begin{equation*}
\begin{split}
|\hat{Z}^{i,N}_{t}|^2 =& |\hat{Z}^{i,N}_{0}|^2 +  \int_0^t 2\langle |\hat{Z}_s^{i,N}|, b_N(s',\tilde{Z}^{i,N}_{s},\mu_{s}^{\tilde{\textbf{Z}}^{N}})\rangle \diff s +\\
&+\int_0^t 2 \langle|\hat{Z}^{i,N}_{s}|, |\sigma_N(s',\tilde{Z}^{i,N}_{s},\mu_{s}^{\tilde{\textbf{Z}}^{N}})|\diff {W_s^i}\rangle+\int_0^t  |\sigma_N(s',\tilde{Z}^{i,N}_{s},\mu_{s}^{\tilde{\textbf{Z}}^{N}})|^2\diff s,
\end{split}
\end{equation*}
so that $\forall t \in [0,T]$,\newline
\begin{equation}\label{enikserw}
\begin{split}
&\mathbb{E}[\sup_{s \in [0,t]}|\hat{Z}^{i,N}_{s}|^2] \le \mathbb{E}\bigg[|\hat{Z}^{i,N}_{0}|^2\bigg] +2\int_0^t\mathbb{E}\bigg[\sup_{u \in [0,s]} \langle |\hat{Z}_u^{i,N}|, b_N(u',\tilde{Z}^{i,N}_{u},\mu_{u}^{\tilde{\textbf{Z}}^{N}})\rangle\bigg] \diff s\\
&+2\mathbb{E}\bigg[\sup_{s \in [0,t]}\int_0^s \langle|\hat{Z}^{i,N}_{u}|, |\sigma_N(u',\tilde{Z}^{i,N}_{u},\mu_{u}^{\tilde{\textbf{Z}}^{N}})|\diff {W_u^i}\rangle \bigg] +\\
&+\int_0^t \mathbb{E}\bigg[\sup_{u \in [0,s]} |\sigma_N(u',\tilde{Z}^{i,N}_{u},\mu_{u}^{\tilde{\textbf{Z}}^{N}})|^2 \bigg] \diff s.
\end{split}
\end{equation}
By the linear growth and Lipschitz regularity of $b$ as in Proposition \ref{condition},
\begin{equation*}
\begin{split}
\langle &|\hat{Z}_u^{i,N}|,|b_N(u',\tilde{Z}^{i,N}_{u},\mu_{u}^{\tilde{\textbf{Z}}^{N}})|\rangle \le \langle |\hat{Z}_u^{i,N}|,(b_N(u',\tilde{Z}^{i,N}_{u},\mu_{u}^{\tilde{\textbf{Z}}^{N}}) - b_N(u',0,\nu^0_u) \rangle + \\
&+ \langle |\hat{Z}_u^{i,N}|,b_N(u',0,\nu^0_u)\rangle
\le\frac{1}{2}\bigg(2|\hat{Z}_u^{i,N}|^2 + L^2|\tilde{Z}^{i,N}_{u}|^2 + \frac{L^2}{N}\sum_{j=1}^N |\tilde{Z}^{j,N}_{u}|^2 + L^2\bigg),
\end{split}
\end{equation*}
where $\nu^0_u$ denotes the approximation to the true measure corresponding to state 0.
Therefore, for a positive constant $A_L = O(L^2)$,
\begin{equation}
\mathbb{E}\bigg[\sup_{u \in [0,s]}\langle |\hat{Z}_u^{i,N}|,|b_N(u',\tilde{Z}^{i,N}_{u},\mu_{u}^{\tilde{\textbf{Z}}^{N}})|\rangle\bigg] \le A_L \bigg(1+ \mathbb{E}\bigg[\sup_{u \in [0,s]} |\hat{Z}^{i,N}_{u}|^2\bigg]\bigg).
\end{equation}

By similar arguments we get that for $\Tilde{A}=O(L)$ and $\Tilde{B}=O(L^2)$ positive constants,
\begin{equation*}
\begin{split}
\mathbb{E}\bigg[\sup_{u \in [0,s]} |\sigma(u',\tilde{Z}^{i,N}_{u},\mu_{u}^{\tilde{\textbf{Z}}^{N}})|\bigg] &\le \tilde{A}\, \mathbb{E}\bigg[\sup_{u \in [0,s]} \bigg( 1+ |\hat{Z}^{i,N}_{u}| + \bigg(\frac{1}{N}\sum_{j=1}^{N} |\hat{Z}^{j,N}_{u}|^2 \bigg)^{1/2}\bigg) \bigg],
\end{split}
\end{equation*}
\begin{equation*}
\begin{split}
\mathbb{E}\bigg[\sup_{u \in [0,s]} |\sigma(u',\tilde{Z}^{i,N}_{u},\mu_{u}^{\tilde{\textbf{Z}}^{N}})|^2\bigg] &\le \tilde{B}\, \mathbb{E}\bigg[\sup_{u \in [0,s]} \bigg( 1+ |\hat{Z}^{i,N}_{u}|^2 + \frac{1}{N}\sum_{j=1}^{N} |\hat{Z}^{j,N}_{u}|^2\bigg) \bigg].
\end{split}
\end{equation*}
Returning to \eqref{enikserw}, we apply the Burkholder--Davis--Gundy inequality, to get that 
\begin{equation*}
\begin{split}
2\mathbb{E}\bigg[\sup_{s \in [0,t]}\int_0^s &\langle|\hat{Z}^{i,N}_{u}|, |\sigma(u',\tilde{Z}^{i,N}_{u},\mu_{u}^{\tilde{Z}^{N}})|\diff {W_u^i}\rangle \bigg] \le \\
&\le C_L \mathbb{E}\Bigg[ \int_0^t |\hat{Z}^{i,N}_{s}|\cdot \bigg( 1+ |\hat{Z}^{i,N}_{s}| +\bigg( \frac{1}{N}\sum_{j=1}^{N} |\hat{Z}^{j,N}_{s}|^2\bigg)^{1/2} \bigg) \diff s \Bigg]\\
&\le C_L \mathbb{E}\Bigg[ \int_0^t \frac{1}{2}|\hat{Z}^{i,N}_{s}|^2 +\frac{1}{2} \bigg( 1+ |\hat{Z}^{i,N}_{s}| +\bigg( \frac{1}{N}\sum_{j=1}^{N} |\hat{Z}^{j,N}_{s}|^2\bigg)^{1/2} \bigg)^2 \diff s \Bigg],\\ 
\end{split}
\end{equation*}
which follows from Young's inequality and where $C_L = O(L)$ is a  positive constant. By the linearity of the expectation and since the processes $\hat{Z}^{j,N}_{s}$ are identically distributed, one concludes that for a positive constant $\tilde{C}_L =O(L)$,
\begin{equation*}
\begin{split}
\max_{i \in \{1,..,N\}} 2\,\mathbb{E}\bigg[\sup_{s \in [0,t]}\int_0^s \langle|\hat{Z}^{i,N}_{u}|, |\sigma(u',&\tilde{Z}^{i,N}_{u},\mu_{u}^{\tilde{Z}^{N}})|\diff {W_u^i}\rangle \bigg] \le\\
&\le \tilde{C}_L\, \mathbb{E}\Bigg[ \int_0^t \big(1+ \max_{i \in \{1,..,N\}} \mathbb{E}\bigg[ |\hat{Z}^{i,N}_{s}|^2 \bigg]\big) \diff s \Bigg].
\end{split}
\end{equation*}
Taking the maximum over the index $i$ in equation \eqref{enikserw} and equipped with the above bounds, we have that for a positive constant $\Tilde{C} = O(L^2)$, 
\begin{equation*}
\max_{i \in \{1,..,N\}} \mathbb{E}[\sup_{s \in [0,t]}|\hat{Z}^{i,N}_{s}|^2] \le \tilde{C}\, \mathbb{E}\Bigg[ \int_0^t \big(1+ \max_{i \in \{1,..,N\}} \mathbb{E}\bigg[\sup_{u \in [0,s]} |\hat{Z}^{i,N}_{u}|^2\bigg] \big) \diff s \Bigg].
\end{equation*}
Finally, applying Gr\"onwall's inequality we get $\max_{i \in \{1,..,N\}} \mathbb{E}\bigg[\sup_{s \in [0,t]}|\hat{Z}^{i,N}_{s}|^2\bigg] \le \Tilde{C},$ 
where $\Tilde{C} = O(L^2e^{L^2})$ is a positive constant and $L=O\big(\frac{1}{\epsilon \delta^2}\big)$.
\end{proof}

\begin{theorem}[Strong convergence of Euler--Maruyama scheme]\label{strong}
Let $Z^{i,N}= (X^{i,N}, Y^{i,N}) $ be the solution to \eqref{simulate} and $\hat{Z}^{i,N}$ the solution to \eqref{interpolant}. Also, let $Z_0 \in L_0^2(\mathbb{R}^2)$. Under assumptions \ref{Word:A1}-\ref{Word:A3}, \ref{Word:B1} and \ref{Word:B2}, there exists positive constants $C = O(L^6 e^{2L^2})$, $L = O\big(\frac{1}{\epsilon \delta^2}\big)$ such that
$$\max_{i \in \{1,..,N\}} \mathbb{E}\Bigg[\sup_{t \in [0,T]} \lvert \hat{Z}^{i,N}_t - Z^{i,N}_t \rvert^2\Bigg] \le C\Delta t.$$
\end{theorem}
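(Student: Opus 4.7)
The strategy is a classical Gr\"onwall argument on the error process $\hat{Z}^{i,N}_t - Z^{i,N}_t$, driven by the Lipschitz-in-state, Lipschitz-in-Wasserstein and $\tfrac{1}{2}$-H\"older-in-time regularity of $b_N$ and $\sigma_N$ established in Proposition \ref{condition}, combined with the one-step freezing estimate of Proposition \ref{onestepestimate}. Subtracting \eqref{simulate} from \eqref{interpolant}, taking $\sup_{s\in[0,t]}$ and expectation, then applying Cauchy--Schwarz to the drift integral and the Burkholder--Davis--Gundy inequality to the stochastic integral yields
\begin{equation*}
\mathbb{E}\!\left[\sup_{s\in[0,t]} \lvert \hat{Z}^{i,N}_s - Z^{i,N}_s\rvert^2\right] \le K_T \int_0^t \mathbb{E}\!\left[|\Delta b_r|^2 + |\Delta \sigma_r|^2\right]\diff r,
\end{equation*}
where $\Delta b_r := b_N(r',\tilde{Z}^{i,N}_r,\mu_r^{\tilde{\mathbf{Z}}^N}) - b_N(r,Z^{i,N}_r,\mu_r^{\mathbf{Z}^N})$, and $\Delta\sigma_r$ is defined analogously. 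Proposition \ref{condition} then bounds each integrand by
\begin{equation*}
|\Delta b_r|^2 + |\Delta \sigma_r|^2 \;\lesssim\; L^2\bigl(\lvert r-r'\rvert + \lvert \tilde{Z}^{i,N}_r - Z^{i,N}_r\rvert^2 + \mathcal{W}_2^2(\mu_r^{\tilde{\mathbf{Z}}^N},\mu_r^{\mathbf{Z}^N})\bigr),
\end{equation*}
with $L = O(1/(\epsilon\delta^2))$.

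The next step is to control the empirical Wasserstein term via the diagonal coupling $\mathcal{W}_2^2(\mu_r^{\tilde{\mathbf{Z}}^N},\mu_r^{\mathbf{Z}^N}) \le \tfrac{1}{N}\sum_{j=1}^N |\tilde{Z}^{j,N}_r - Z^{j,N}_r|^2$. Since $\tilde{Z}^{j,N}_r = \hat{Z}^{j,N}_{r'}$, a triangle inequality splits each $|\tilde{Z}^{j,N}_r - Z^{j,N}_r|^2$ into a freezing error $|\hat{Z}^{j,N}_{r'}-\hat{Z}^{j,N}_r|^2$, uniformly bounded in $j$ by $C_L\Delta t$ from Proposition \ref{onestepestimate}, plus the target error $|\hat{Z}^{j,N}_r - Z^{j,N}_r|^2$; the same splitting applies to the state term $|\tilde{Z}^{i,N}_r - Z^{i,N}_r|^2$. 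Setting $\phi(t) := \max_i \mathbb{E}[\sup_{s\le t} |\hat{Z}^{i,N}_s - Z^{i,N}_s|^2]$ and invoking the exchangeability of the particles so that $\tfrac{1}{N}\sum_j \mathbb{E}[\sup_{s\le t}|\hat{Z}^{j,N}_s-Z^{j,N}_s|^2] \le \phi(t)$, all the estimates close into
\begin{equation*}
\phi(t) \le K'_T L^2 \int_0^t \bigl(\Delta t + C_L \Delta t + \phi(s)\bigr)\, \diff s.
\end{equation*}
Gr\"onwall's inequality then gives $\phi(T) \le C\Delta t$ with $C$ of order $L^2 C_L e^{L^2 T}$; plugging in $C_L = O(L^4 e^{L^2})$ from Proposition \ref{onestepestimate} yields the claimed $C = O(L^6 e^{2L^2})$ with $L=O(1/(\epsilon\delta^2))$.

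The main obstacle is the careful bookkeeping of constants through these inequalities in order to isolate the stated $L^6 e^{2L^2}$ dependence: one must absorb the freezing error $|\hat{Z}^{j,N}_{r'}-\hat{Z}^{j,N}_r|^2$ that appears in \emph{both} the state argument and (through each particle) the measure argument before invoking Gr\"onwall, and one must verify that no hidden $N$-dependence enters through the Wasserstein estimate. The diagonal coupling together with exchangeability handles the latter cleanly, while the freezing errors contribute only an additive $C_L\Delta t$ that is compatible with the target rate. The rest is essentially a routine application of BDG and Gr\"onwall, tracking the $\epsilon,\delta$ dependence via $L$ and $C_L$.
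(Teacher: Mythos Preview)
Your proposal is correct and follows essentially the same route as the paper: Lipschitz regularity from Proposition~\ref{condition}, the diagonal-coupling bound on $\mathcal{W}_2$, the one-step freezing estimate of Proposition~\ref{onestepestimate}, exchangeability to close the $\max_i$, and Gr\"onwall, with the constant tracking to $O(L^6 e^{2L^2})$ matching the paper's. The only cosmetic difference is that the paper applies It\^o's formula to $|E_t^i|^2$ and then BDG to the resulting martingale inner-product term, whereas you bound $|E_t^i|^2$ directly via $(a+b)^2\le 2a^2+2b^2$ followed by Cauchy--Schwarz on the drift integral and BDG on the stochastic one --- both are standard and yield the same Gr\"onwall inequality.
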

\begin{proof}
Let $E_t^i:=\hat{Z}^{i,N}_t-Z^{i,N}_t$ so that it satisfies the SDE:
\begin{equation*}
\begin{split}
    &\diff E_t^i =(b_N(t',\tilde{Z}^{i,N}_{t},\mu_{t}^{\tilde{\textbf{Z}}^{N}})- b_N(t,Z^{i,N}_t,\mu_{t}^{{\textbf{Z}}^{N}}))\diff t +\\
    & \quad \quad\quad\quad\quad\quad\quad\quad\quad\quad\quad\quad+(\sigma_N(t',\tilde{Z}^{i,N}_{t},\mu_{t}^{\tilde{\textbf{Z}}^{N}})- \sigma_N(t,Z^{i,N}_t,\mu_t^{\textbf{Z}^N}))\diff W^i_t.
\end{split}
\end{equation*}
By It\^{o}'s lemma we have that
\begin{equation}\label{E^i}
\begin{split}
|E_t^i|^2 &=2 \int_0^t\langle E_s^i, (b_N(s',\tilde{Z}^{i,N}_{s},\mu_{s}^{\tilde{\textbf{Z}}^{N}})- b_N(s,Z^{i,N}_s,\mu_{s}^{{\textbf{Z}}^{N}}))\rangle \diff s \\
&+ 2 \int_0^t\langle E_s^i, \bigg(\sigma_N(s',\tilde{Z}^{i,N}_{s},\mu_{s}^{\tilde{\textbf{Z}}^{N}})- \sigma_N(s,Z^{i,N}_s,\mu_s^{\textbf{Z}^N})\bigg)\diff W^i_s \rangle\\
&+ \int_0^t |\sigma_N(s',\tilde{Z}^{i,N}_{s},\mu_{s}^{\tilde{\textbf{Z}}^{N}})- \sigma_N(s,Z^{i,N}_s,\mu_s^{\textbf{Z}^N})|^2 \diff s.
\end{split}
\end{equation}
Using the inequality $(a+b)^2 \le 2a^2 +2b^2$
and Proposition \ref{condition},
we have that
{\allowdisplaybreaks
\begin{eqnarray*}
\label{sigmaa}
&|\sigma_N(s',\tilde{Z}^{i,N}_{s},\mu_{s}^{\tilde{\textbf{Z}}^{N}})- \sigma_N(s,Z^{i,N}_s,\mu_s^{\textbf{Z}^N})|^2 \le 2|\sigma(s',\tilde{Z}^{i,N}_{s},\mu_{s}^{\tilde{\textbf{Z}}^{N}})-\\
&-\sigma_N(s,\tilde{Z}^{i,N}_{s},\mu_{s}^{\tilde{\textbf{Z}}^{N}})|^2 +2|\sigma_N(s,\tilde{Z}^{i,N}_{s},\mu_{s}^{\tilde{\textbf{Z}}^{N}})-\sigma_N(s,Z^{i,N}_s,\mu_s^{\textbf{Z}^N})|^2\\
& \le 2L^2|s'-s| + 2L^2\big(|\tilde{Z}^{i,N}_{s} - Z^{i,N}_s| + \mathcal{W}_{2}(\mu_{s}^{\tilde{\textbf{Z}}^{N}},\mu_s^{\textbf{Z}^N})\big)^2\\
&\le  2L^2\Delta t + 8L^2\big(|\tilde{Z}^{i,N}_{s} - \hat{Z}^{i,N}_s|^2+ |E^i_s|^2 +\frac{1}{N}\sum_{j=1}^{N} |\tilde{Z}^{j,N}_{s} - \hat{Z}^{j,N}_s|^2+ \frac{1}{N}\sum_{j=1}^{N}|E^j_s|^2\big),
\end{eqnarray*}
}
by the triangle inequality for $\mathcal{W}_{2}(\mu,\nu)$ (see, e.g., \cite{Villani})
%
and its standard bound
$ \mathcal{W}_{2}(\mu_{s}^{\tilde{\textbf{Z}}^{N}},\mu_s^{\hat{\textbf{Z}}^{N}}) \le \big(\frac{1}{N}\sum_{j=1}^{N} |\tilde{Z}^{j,N}_{s} - \hat{Z}^{j,N}_s|^2 \big)^{1/2}$. 
From Proposition \ref{onestepestimate}, for some $A>0$,
\begin{equation}\label{sigmaa2}
\begin{split}
\mathbb{E}\Bigg[\sup_{s \in [0,t]}|\sigma_N&(s',\tilde{Z}^{i,N}_{s},\mu_{s}^{\tilde{\textbf{Z}}^{N}})- \sigma_N(s,Z^{i,N}_s,\mu_s^{\textbf{Z}^N})|^2\Bigg] \le\\
&\le AL^2\Delta t + 8L^2 \mathbb{E}\big[\sup_{s \in [0,t]}|E^i_s|^2\big]+8L^2\mathbb{E}\big[\sup_{s \in [0,t]}\frac{1}{N}
\sum_{j=1}^{N}|E^j_s|^2\big].
\end{split}
\end{equation}
Returning to \eqref{E^i}, by the BDG inequality, for some $C_L>0, C_L = O(L^2)$, 
\begin{equation*}
\begin{split}
&\mathbb{E}\Bigg[\sup_{s \in [0,t]} \bigg| \int_0^s\langle E_u^i, (\sigma_N(u',\tilde{Z}^{i,N}_{u},\mu_{u}^{\tilde{\textbf{Z}}^{N}})- \sigma_N(u,Z^{i,N}_u,\mu_u^{\textbf{Z}^N}))\diff W^i_u \rangle \bigg| \Bigg] \le \\
& \le C_L \mathbb{E}\Bigg[ \int_0^t E_s^i\cdot \big( |s'-s|^{1/2} + |\tilde{Z}^{i,N}_{s} - {Z}^{i,N}_s| + \mathcal{W}_{2}(\mu_{s}^{\tilde{\textbf{Z}}^{N}},\mu_s^{{\textbf{Z}}^{N}})\big) \diff s \Bigg]\\
&\le C_L \mathbb{E}\Bigg[ \int_0^t\Bigg( \frac{1}{2}|E_s^i|^{2} + \frac{1}{2}\bigg( |s'-s|^{1/2} + |\tilde{Z}^{i,N}_{s} - {Z}^{i,N}_s| + \mathcal{W}_{2}(\mu_{s}^{\tilde{\textbf{Z}}^{N}},\mu_s^{{\textbf{Z}}^{N}})\bigg)^{2}\Bigg) \diff s \Bigg]\\
& \le C_L \mathbb{E}\Bigg[ \int_0^t\Bigg( \frac{1}{2}|E_s^i|^{2} + \frac{3}{2} |s'-s| +\frac{3}{2}|\tilde{Z}^{i,N}_{s} - {Z}^{i,N}_s|^2 +\frac{3}{2} \mathcal{W}_{2}(\mu_{s}^{\tilde{\textbf{Z}}^{N}},\mu_s^{{\textbf{Z}}^{N}})^{2} \Bigg) \diff s \Bigg]\\
&\le C_L\mathbb{E}\Bigg[ \int_0^t \Bigg( \frac{1}{2}|E_s^i|^{2} + \frac{3}{2}\Delta t +3|\tilde{Z}^{i,N}_{s} - \hat{Z}^{i,N}_s|^2 + 3|\hat{Z}^{i,N}_{s} - {Z}^{i,N}_s|^2+\\
& +\frac{3}{2} \mathcal{W}_{2}(\mu_{s}^{\tilde{\textbf{Z}}^{N}},\mu_s^{{\textbf{Z}}^{N}})^{2} \Bigg)\diff s \Bigg] \le  C_L \mathbb{E}\Bigg[  \frac{3}{2}t\Delta t + \int_0^t \Bigg( \frac{7}{2}|E_s^i|^{2} +3|\tilde{Z}^{i,N}_{s} -\hat{Z}^{i,N}_s|^2 +\\
&+\frac{3}{N}\sum_{j=1}^{N} |\tilde{Z}^{j,N}_{s} - \hat{Z}^{j,N}_s|^2 +\frac{3}{N}\sum_{j=1}^{N}|E_s^j|^{2} \Bigg) \diff s \Bigg].
\end{split}
\end{equation*}
Since processes $Z^j$ are identically distributed
and  $\mathbb{E}\big[|\tilde{Z}^{j,N}_{s} - \hat{Z}^{j,N}_s|^2\big] = O(\Delta t)$, by the linearity of expectation, there exists $C=O(L^2)$ a positive constant such that 
\begin{equation*}
\begin{split}
\max_{i \in \{1,..,N\}} \mathbb{E}\bigg[\sup_{s \in [0,t]} \lvert &\int_0^s\langle E_u^i, (\sigma(u',\tilde{Z}^{i,N}_{u},\mu_{u}^{\tilde{\textbf{Z}}^{N}})- \sigma_N(u,Z^{i,N}_u,\mu_u^{\textbf{Z}^N}))\diff W^i_u \rangle \rvert \bigg] \\
&\le Ct\Delta t + C\max_{i \in \{1,..,N\}}\mathbb{E} \bigg[\int_0^t (|E_s^i|^{2})\diff s \bigg].
\end{split}
\end{equation*}

We now consider the first term of equation \eqref{E^i}. For the Lipschitz constant $L$,
\begin{equation*}
\begin{split}
&\langle E_s^i, (b_N(s',\tilde{Z}^{i,N}_{s},\mu_{u}^{\tilde{\textbf{Z}}^N})- b_N(s,Z^{i,N}_s,\mu_{u}^{{\textbf{Z}}^N}))\rangle \le  L E_s^i| \tilde{Z}^{i,N}_s-\hat{Z}^{i,N}_s|+\\
&+L E_s^i| \hat{Z}^{i,N}_s-{Z}^{i,N}_s| + L E_s^i |s'-s|^{1/2} +LE_s^i \mathcal{W}_{2}(\mu_{s}^{\tilde{\textbf{Z}}^{N}},\mu_s^{\hat{\textbf{Z}}^{N}}) + LE_s^i \mathcal{W}_{2}(\mu_{s}^{\hat{\textbf{Z}}^{N}},\mu_s^{{\textbf{Z}}^{N}})\\
&\le \frac{L}{2} \bigg( 6|E_s^i|^2 + | \tilde{Z}^{i,N}_s-\hat{Z}^{i,N}_s|^2+|s'-s|+\frac{1}{N}\sum_{j=1}^N|\tilde{Z}^{i,N}_s-\hat{Z}^{i,N}_s|^2 + \frac{1}{N}\sum_{j=1}^N|E_s^j|^2 \bigg),
\end{split}
\end{equation*}
which follows by $|s'-s| \le \Delta t.$ Now using the one-step estimate of  $|\tilde{Z}^{i,N}_{s} -\hat{Z}^{i,N}_s|^2$ proved in Proposition \ref{onestepestimate}, we have that for constants $\tilde{C_1}, \tilde{C_2}>0$,
\begin{equation*}
\begin{split}
\max_{i \in \{1,..,N\}} \mathbb{E}\bigg[\sup_{s \in [0,t]} 2\int_0^t\langle E_s^i,& (b_N(s',\tilde{Z}^{i,N}_{s})- b_N(s,Z^{i,N}_s))\rangle \diff s \bigg] \\
&\le \tilde{C_1}t\Delta t + \tilde{C_2}\max_{i \in \{1,..,N\}} \mathbb{E}\bigg[\sup_{s \in [0,t]} \int_0^t |E_s^i|^2\diff s\bigg].
\end{split}
\end{equation*}
Substituting the above bounds back into equation \eqref{E^i},
\begin{equation*}
\begin{split}
&\max_{i \in \{1,..,N\}} \mathbb{E}\bigg[\sup_{s \in [0,t]}|E_s^i|^2\bigg] \le \tilde{C_1}t\Delta t + \tilde{C_2}\max_{i} \mathbb{E}\bigg[ \int_0^t\sup_{u \in [0,s]} |E_u^i|^2 \diff s\bigg]+ Ct \Delta t +\\
&+C\max_{i}\mathbb{E} \Bigg[  \int_0^t\sup_{u \in [0,s]} |E_u^i|^{2}\diff s \Bigg]+ \int_0^t \Bigg( AL^2\Delta t + 16L^2\max_{i} \mathbb{E}\bigg[\sup_{u \in [0,s]}|E^i_u|^2\bigg] \Bigg) \diff s\\
&\le \tilde{K}_1t\Delta t +\tilde{K}_2  \int_0^t \max_{i \in \{1,..,N\}} \mathbb{E} \Bigg[\sup_{u \in [0,s]}|E_u^i|^{2} \Bigg]\diff s,
\end{split}
\end{equation*}
where $\tilde{K}_1$ and $\tilde{K}_2$ are positive constants of order $L^6e^{L^2}$ and $L^2$ respectively. By Gr\"onwall's inequality, with $\tilde{K}_2$ non-negative and $ \tilde{K}_1t\Delta t$ non-decreasing, for $t \in [0,T],$ 
\begin{equation}
 \max_{i \in \{1,..,N\}} \mathbb{E}\bigg[\sup_{s \in [0,t]} |E_t^i|^2 \bigg] \le \tilde{K}_1t\Delta t e^{\int_0^t\tilde{K_2}ds} \le C\Delta t,
\end{equation}
where $C = O(L^6 e^{2L^2})$ with $L= O\big(\frac{1}{{\epsilon}\delta^2}\big)$ as in Proposition \ref{condition}.
\end{proof}

\begin{remark}
The dependence of $C = O(L^6 e^{2L^2})$  on $\epsilon$ and $\delta$ (through $L = O\big(\frac{1}{\epsilon \delta^2}\big)$) as predicted by Theorem \ref{strong} will be found pessimistic in our numerical tests.
As regards $\epsilon$, this is because we made no assumptions on the regularity of the distribution of $(X_t,Y_t)$; for a smooth density, it seems plausible that the estimator for conditional expectations is better behaved for small $\epsilon$. A positive $\delta$ was included to prevent a singularity if the denominator in the coefficient approaches 0, but this is not necessary if either there is a positive density in the region where the formula is used (in practice, extrapolation is used outside a compact set), or $g$ has a positive lower bound. 
\end{remark}

\section{Implementation and numerical results}\label{chap:python}
We now consider a Heston-type local volatility model to numerically test the particle method and investigate how the regularisation parameter affects the calibration. The risk-neutral dynamics of a Heston-type local volatility model are
\begin{equation}\label{HestonLVM}
    \begin{split}
    &\diff S_t = r\,S_t \diff t + \sqrt{V_t}\,S_t\, \alpha(t,S_t) \diff W^s_t,\\
    &\diff V_t = k(\theta-V_t)\diff t + \xi\sqrt{V_t}\diff W^v_t,
    \end{split}
\end{equation}
with $\diff W^s_t\diff W^v_t = \rho_{S,V}\diff t$.

We note that this model differs from the one studied in the previous sections but we expect a similar behaviour for this setting as well. Specifically, the CIR volatility process makes the diffusion coefficient in the second component only $\frac{1}{2}$\textendash H\"older in space and not globally Lipschitz continuous. The existing literature on the well-posedness of McKean\textendash Vlasov SDEs (MVSDEs) with non-Lipschitz coefficients and propagation of chaos results is somewhat scarce. An initial work on the strong convergence of the Euler scheme (without order) for MVSDEs assuming only continuity of the coefficients and non-degenerate diffusion (unlike in our case), is \cite{X.Zhang}. Bao and Huang in \cite{BaoHuang} provide results on propagation of chaos and strong convergence of the Euler\textendash Maruyama (EM) scheme for MVSDEs for two cases of H\"older continuous (i) diffusion and (ii) drift coefficients. Another result on the strong convergence of the EM scheme comes from Liu et al. in \cite{Liu.Shi.Wu} for the case of super-linear drift and H\"older continuous diffusion coefficients. Both of these results, however, consider diffusion coefficients that do not depend on the law of the process and therefore cannot be applied directly in our case.

The procedure to calibrate is two-fold. Firstly, having a set of call option prices observed in the market, we calibrate a pure Heston process to get the parameters that best match the market prices according to a chosen optimization technique. Secondly, in each time-step of our discretisation, we calibrate the leverage function $\alpha(t,S)$. Recall the condition for exact calibration, as given in \cite{iff} and adapted for the above Heston-type local volatility model \eqref{HestonLVM}, is 
$\alpha^{2}(t, S)=
\sigma_{\text{Dup}}^{2}(t,S)/\mathbb{E}^{\mathbb{Q}}[V_{t}|S_{t} = S].$
This requires a priori knowledge of the local volatility surface and since there is no knowledge of the option prices for all possible strikes and maturities, then it is necessary to interpolate and extrapolate the local volatility. The authors in \cite{book} propose cubic spline interpolation and flat extrapolation. To approximate the conditional expectation in the leverage function, we use the particle method as in \cite{book}, and revised in Section \ref{chap:particlemethod} above. 
We use the Euler\textendash Maruyama scheme as presented in Section \ref{chap:euler} and repeated below for the Heston-type LSV model \eqref{HestonLVM}.

Finally, having calibrated the model, we are able to estimate European option prices by the average discounted payoff $\frac{1}{N}\sum^N_{i=1} e^{-r\,T}(S^{i,N}_T-K)^{+},$ where $r$ denotes the interest rate and $K$ the strike price of the option.

\bigskip

\textit{Calibration}.
We fix $\delta$ and the bandwidth $\epsilon$.
We set the time-discretisation $\{t_m: m = 0,..,M\} = \{t_0=0,t_1,...,t_M=T\} $ of $[0,T] $ with uniform time-steps of length $\Delta t = T/M$ so that $t_m = m\cdot\Delta t$.
Moreover, below, we use $\Delta W^i_{t_m}=W^i_{t_{m+1}}-W^i_{t_m}$ so that  $\Delta W^i_{t_m} \sim \mathcal{N}(0,\Delta t)$ and $\Delta W^{v,i}_{t_m} = \rho\Delta W^{s,i}_{t_m} + \sqrt{1- \rho^2}Z^i_{t_m}$, where $Z^i_{t_m}$ are independent Brownian motions.
We then follow the following algorithm:\label{alg:one}
\begin{algorithmic}[1]
\State $S^{i,N}_0,V^{i,N}_0 \gets s_0,v_0$\textnormal{ for all i}
\State $\alpha^{2}(0, S^{i,N}_0) \gets \sigma_{\text{Dup}}^{2}(0,S^{i,N}_0)
\frac{\sqrt{N+\delta}}{\sqrt{NV^{i,N}_0+\delta}}$
\While{$m \in \{0,...,M-1\}$} \textnormal{ for all i}
    \State $S^{i,N}_{t_{m+1}} \gets S^{i,N}_{t_{m}} + rS^{i,N}_{t_{m}} \Delta t + \sqrt{V^{i,N}_{t_m}}\,S^{i,N}_{t_m}\, \alpha(t_m,S^{i,N}_{t_m})\Delta W^{s,i}_{t_m}$\;
     \State $V^{i,N}_{t_{m+1}} \gets V^{i,N}_{t_{m}} +k(\theta-V^{i,N}_{t_m})\Delta t + \xi \sqrt{V^{i,N}_{t_m}} \Delta W^{v,i}_{t_m}$\;
      \State $\alpha(t_{m+1},S^{i,N}_{t_{m+1}}) \gets \sigma_{\text{Dup}}(t_m,S^{i,N}_{t_m})\frac{\sqrt{\sum_{j=1}^{N}\Phi_{\epsilon}(S_{t_m}^{j,N}-S^{i,N}_{t_m})+\delta}}{\sqrt{\sum_{j=1}^{N}V_{t_m}^{j,N} \Phi_{\epsilon}(S_{t_m}^{j,N}-S^{i,N}_{t_m})+\delta}}$\;
      \State $m \gets m+1$
\EndWhile
\end{algorithmic}
\bigbreak

Our implementation uses QuantLib, an open-source library for quantitative finance. 
For testing purposes, instead of using real-market call option prices, we generate a volatility surface using the Heston model with  parameters $v_0 = 0.0094$, $\kappa = 1.4124 $, $\theta=0.0137 $, $\xi=0.2988 $, $\rho=-0.1194$, which were calibrated to an FX market in \cite{hestoncal}, and treat this as the market implied surface.
We then alter the initial parameters to $v_0 = 0.014,  \kappa= 1.4, \theta = 0.01, \xi =  0.3, \rho =-0.2$.

In Figure \ref{fig:surfaces}, we plot the artificial ``market" implied volatility surface and the one coming from the pure Heston model with the above modified parameters. We then calibrate the Heston-type local volatility model \eqref{HestonLVM} using the particle method and expect that the leverage function will ``correct'' the difference in the surfaces.

\begin{figure}[H]
   \begin{minipage}{0.5\textwidth}
     \centering
     \includegraphics[width=.8\linewidth]{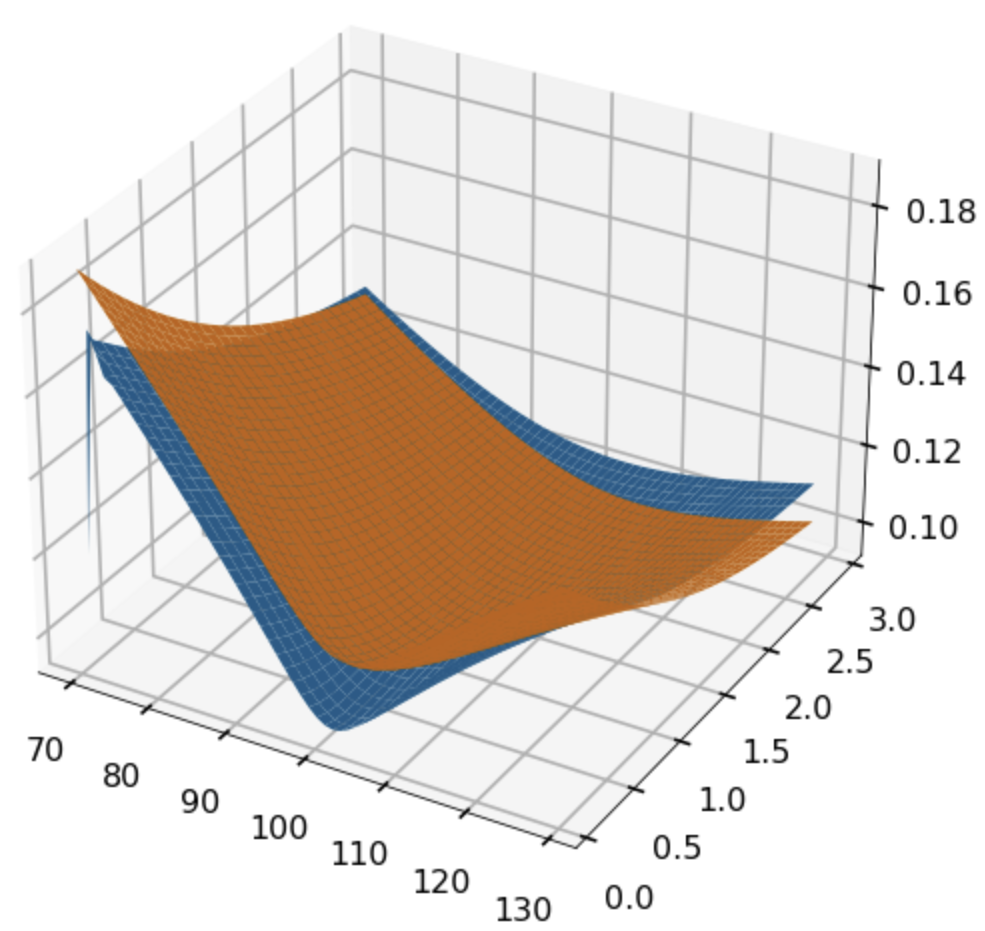}
     \caption{Artificial ``market" (in blue) and pure Heston (in orange) implied volatility surfaces.}\label{fig:surfaces}
   \end{minipage}\hfill
   \begin{minipage}{0.5\textwidth}
     \centering
     \includegraphics[width=.8\linewidth]{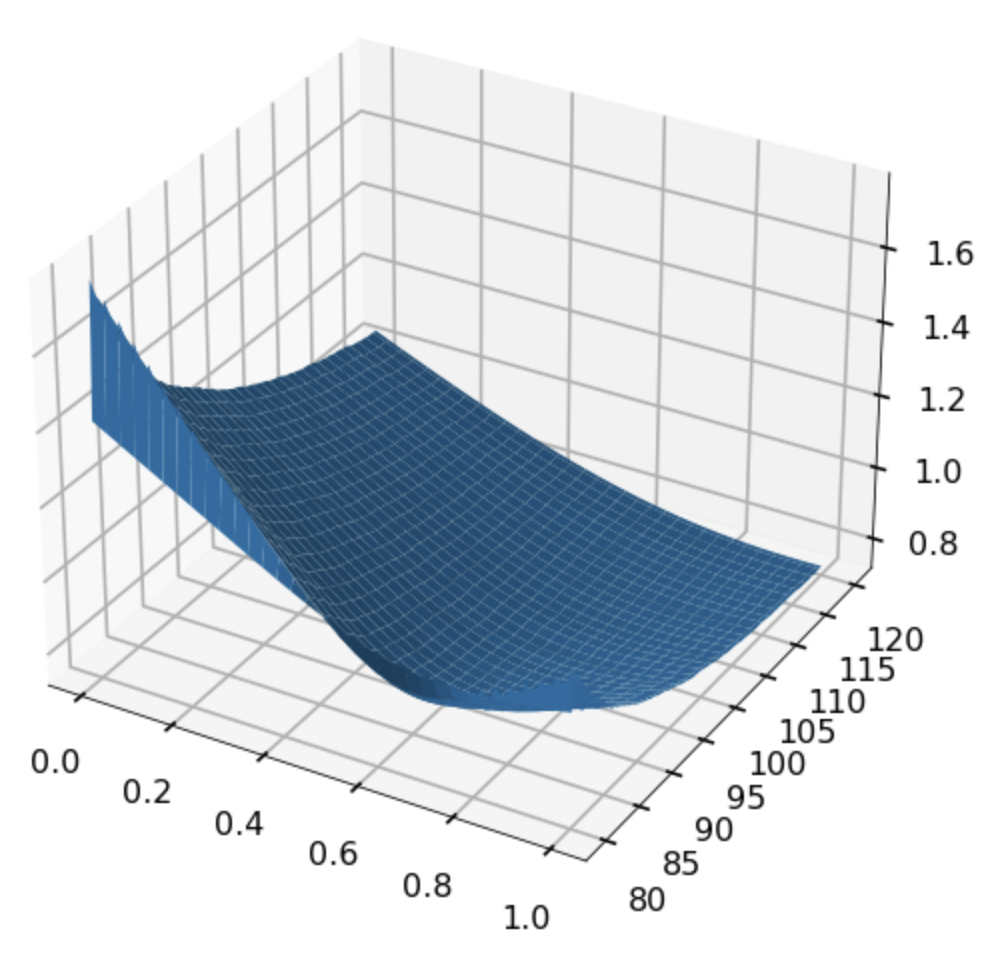}
     \caption{Leverage function for T=$1$Y.}\label{fig:leveragefunction}
   \end{minipage}
\end{figure}

To illustrate our results, in Figure \ref{fig:leveragefunction} we plot the leverage function for European option prices for a range of maturities from $T=0$ to $1Y$ and strikes ranging from $80 \text{ to }120$. We also fix $\epsilon_1 = S_0\,N^{-1/5}$, where $S_0$ is the initial value of process $(S_t)_{t\in [0,T]}$, and $\delta =0.00001$. We choose the bandwidth parameter $\epsilon$ according to the asymptotic mean integrated squared error (AMISE) optimality criterion. It is well-established, see \cite{AMISE}, that the optimal $\epsilon$ that minimises the AMISE of the Nadaraya\textendash Watson kernel density estimator is $c\, N^{-1/5}$, for $c$ a constant. 


We now investigate how the choice of the bandwidth $\epsilon$ and parameter $\delta$  affect the convergence and accuracy of the particle method. To do so, we first compute the Root Mean Square Error (RMSE) as a measure for the difference between the artificial ``market" prices and the prices coming from the calibrated LSV model for European call options with $T =1Y$ maturity for strikes ranging between 80 and 120, for different values of the regularisation parameters.\newline
To price the European call options we follow the above calibration procedure using $M = 100$ time-steps and to save computational time, only $N=10^3$ particles. The running time for the calibration is then at around $7.0$ seconds. We acknowledge that this is only a small number of particles and certain acceleration techniques, as discussed in \cite{book}, could improve the performance of our computations.\newline

Firstly, we fix $\delta = 0.00001$ and alter $\epsilon$. As shown in Table \ref{table:4} and Figures \ref{Fig:h1} and \ref{Fig:h2}, more accurate pricing occurs as $\epsilon$ gets smaller, which is promising in terms of the convergence of the approximation as $\epsilon \to 0.$
On the other hand, our results do not agree with the initial choice of bandwidth by the AMISE criterion.
\begin{table}[H]
\centering
\begin{tabular}{||c c c c c ||} 
 \hline
 $\delta$ fixed.  & $\epsilon_1 =S_0N^{-1/5} $ & $\epsilon_2 = \epsilon_1/10$ & $\epsilon_3 =\epsilon_1/100 $ & $\epsilon_4 = 10\cdot \epsilon_1 $ \\ [0.5ex] 
 \hline\hline
 RMSE &  0.302 &  0.231 &0.068 &  1.34  \\ [1ex] 
 \hline
\end{tabular}
\caption{RMSE for fixed $\delta$ and varying $\epsilon$ .}
\label{table:4}
\end{table}
\begin{figure}[H]
   \begin{minipage}{0.5\textwidth}
     \centering
     \includegraphics[width=.8\linewidth]{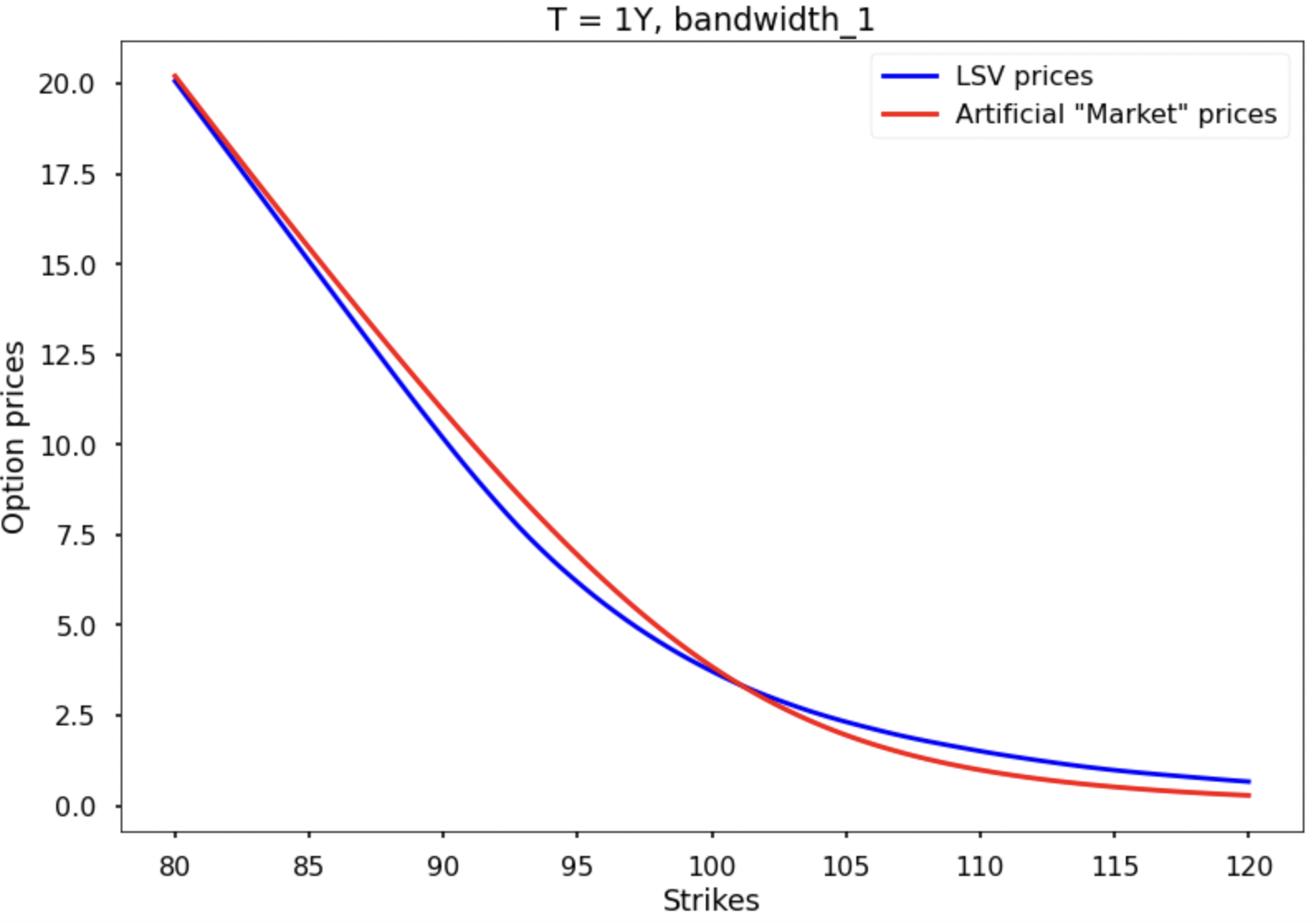}
     \caption{Prices comparison, $\epsilon_1 = S_0N^{-1/5} $}\label{Fig:h1}
   \end{minipage}\hfill
   \begin{minipage}{0.5\textwidth}
     \centering
     \includegraphics[width=.8\linewidth]{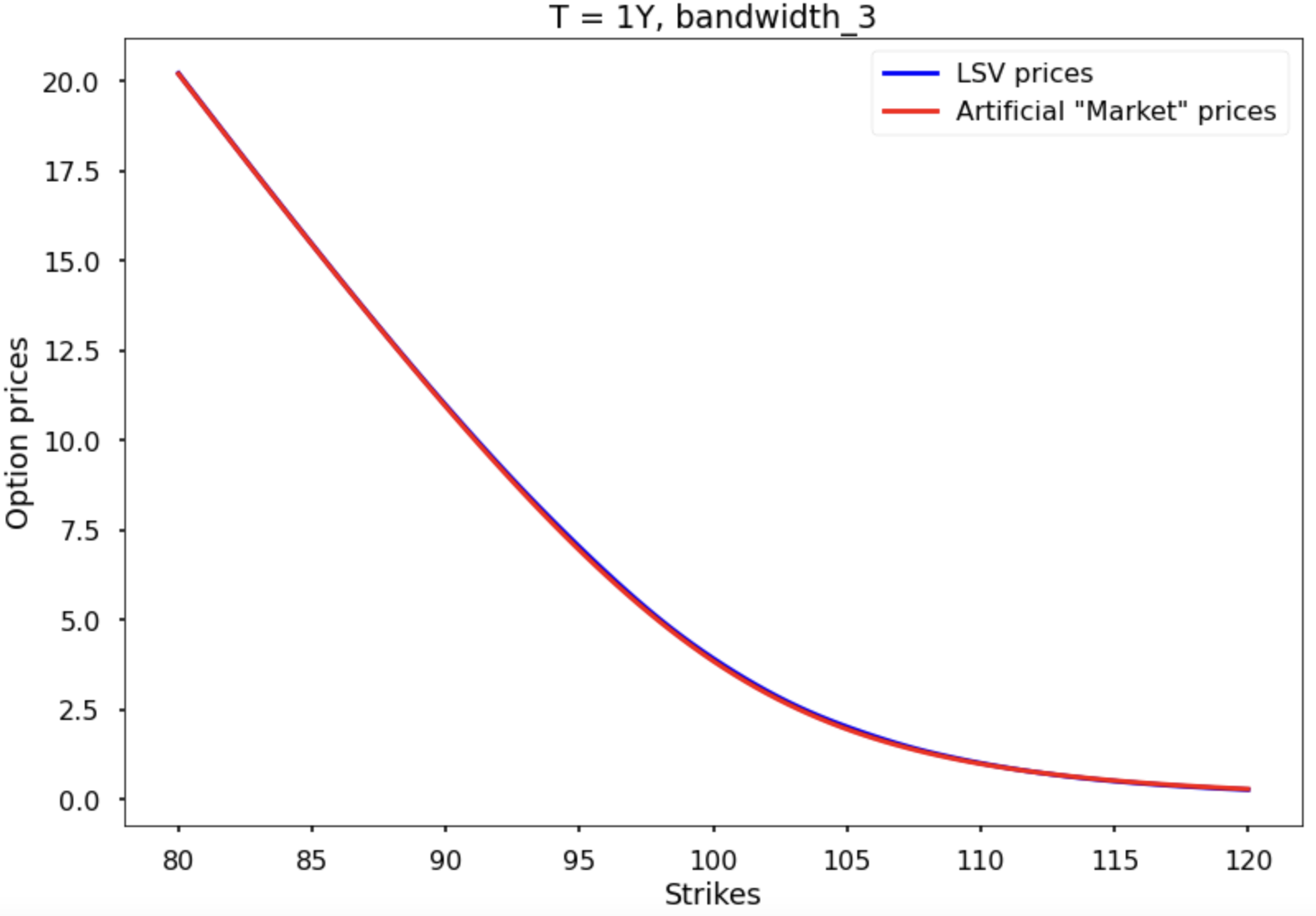}
     \caption{Prices comparison, $\epsilon_3 = \epsilon_1/100$}\label{Fig:h2}
   \end{minipage}
\end{figure}

We now fix $\epsilon = S_0N^{-1/5}/100$, which is the bandwidth that gave the most accurate result above, and alter $\delta$. As shown in Table \ref{table:3}, 
Figures \ref{Fig:d1} and \ref{Fig:d2}, the calibration becomes more accurate as $\delta$ gets smaller, which verifies the convergence of the regularisation as $\delta \to 0.$
\begin{table}[H]
\centering
\begin{tabular}{||c c c c c ||} 
 \hline
 $\epsilon$ fixed.  & $\delta_1 = 0.01 $ & $\delta_2 = 0.001$ & $\delta_3 = 0.0001$ & $\delta_4 = 0.00001$ \\ [0.5ex] 
 \hline\hline
 RMSE &  0.287&  0.196 &0.069 &  0.068  \\ [1ex] 
 \hline
\end{tabular}
\caption{RMSE for fixed $\epsilon$ and varying $\delta$.}
\label{table:3}
\end{table}
\begin{figure}[H]
   \begin{minipage}{0.5\textwidth}
     \centering
     \includegraphics[width=.8\linewidth]{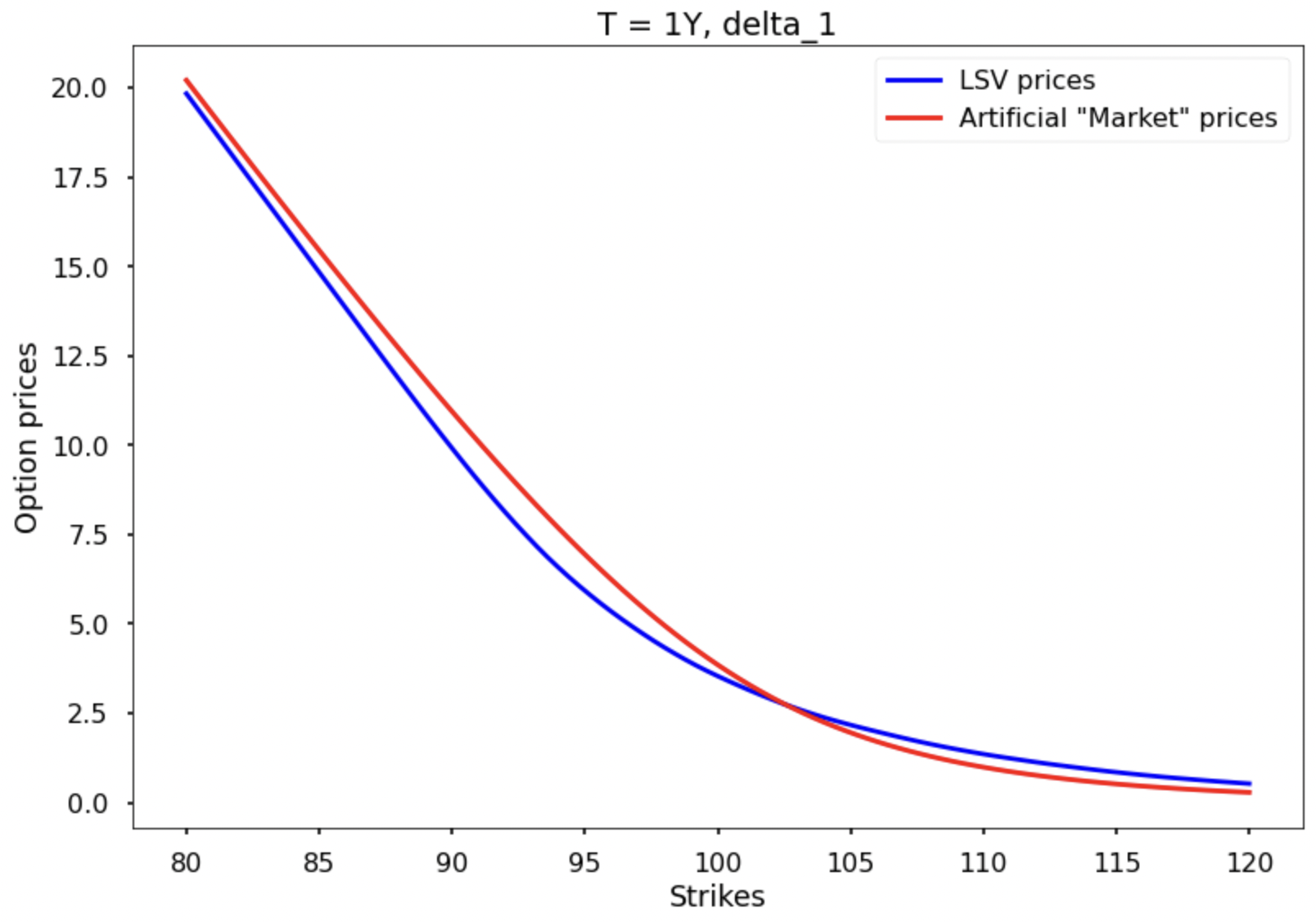}
     \caption{Prices comparison, $\delta_1$}\label{Fig:d1}
   \end{minipage}\hfill
   \begin{minipage}{0.5\textwidth}
     \centering
     \includegraphics[width=.8\linewidth]{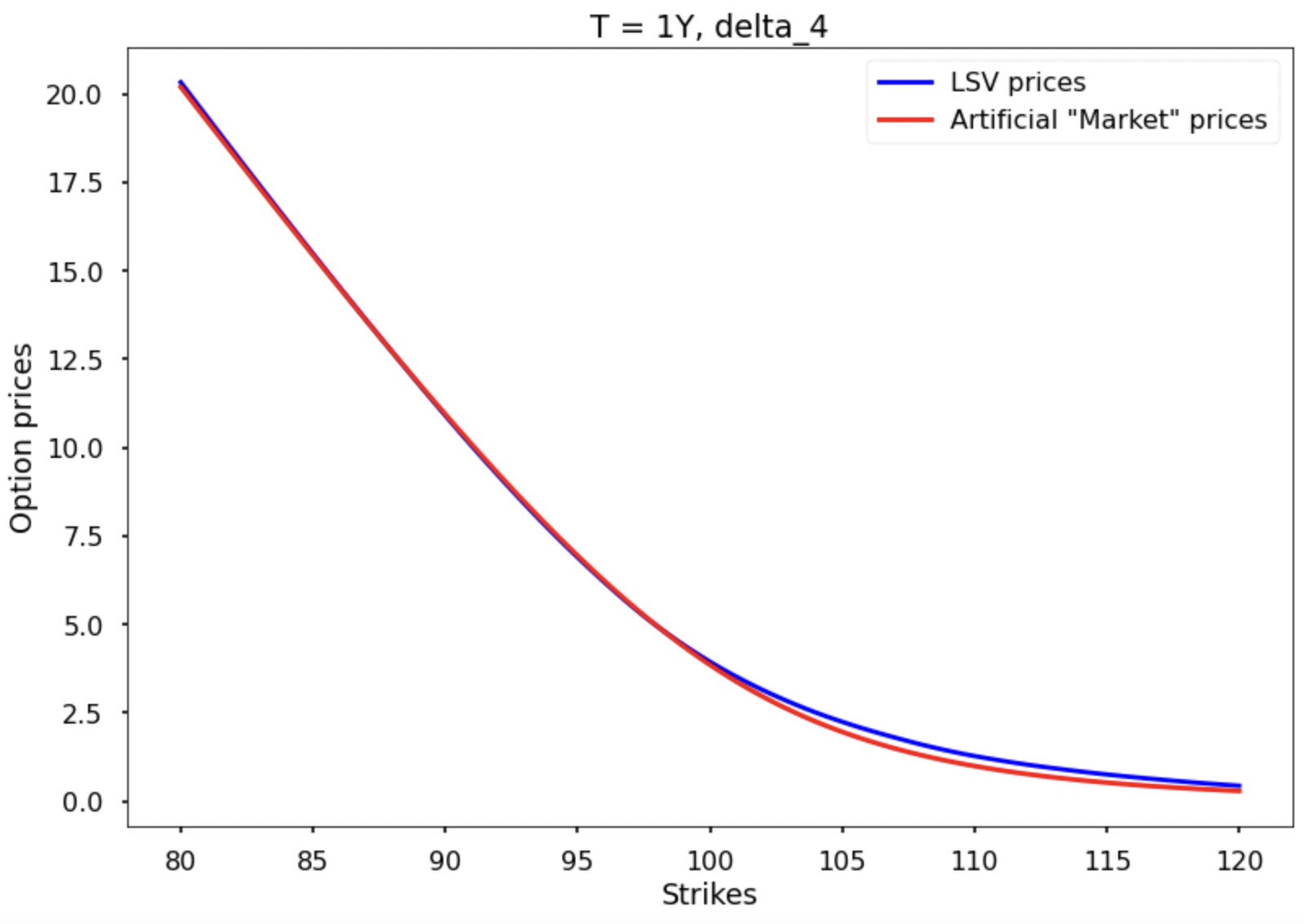}
     \caption{Prices comparison, $\delta_4 = 0.001\cdot \delta_1$}\label{Fig:d2}
   \end{minipage}
\end{figure}

Finally, we investigate how the choice of $\epsilon$ affects the convergence of the Euler-Maruyama scheme and the pathwise strong propagation of chaos. Specifically, we test the convergence results for the bandwidths: ${\epsilon_1 = 10,\epsilon_2 = 0.1, \epsilon_3 = 0.001}.$\newline
In Figure \ref{Fig:euler1}, we observe the strong convergence of the discretised scheme with a rate of order $1/2$ in the time\textendash step as expected by theory and proved in Section \ref{chap:euler} above. We notice that all bandwidths give an accurate result.
\begin{figure}[H]
   \begin{minipage}{0.55\textwidth}
     \includegraphics[width=.8\linewidth]{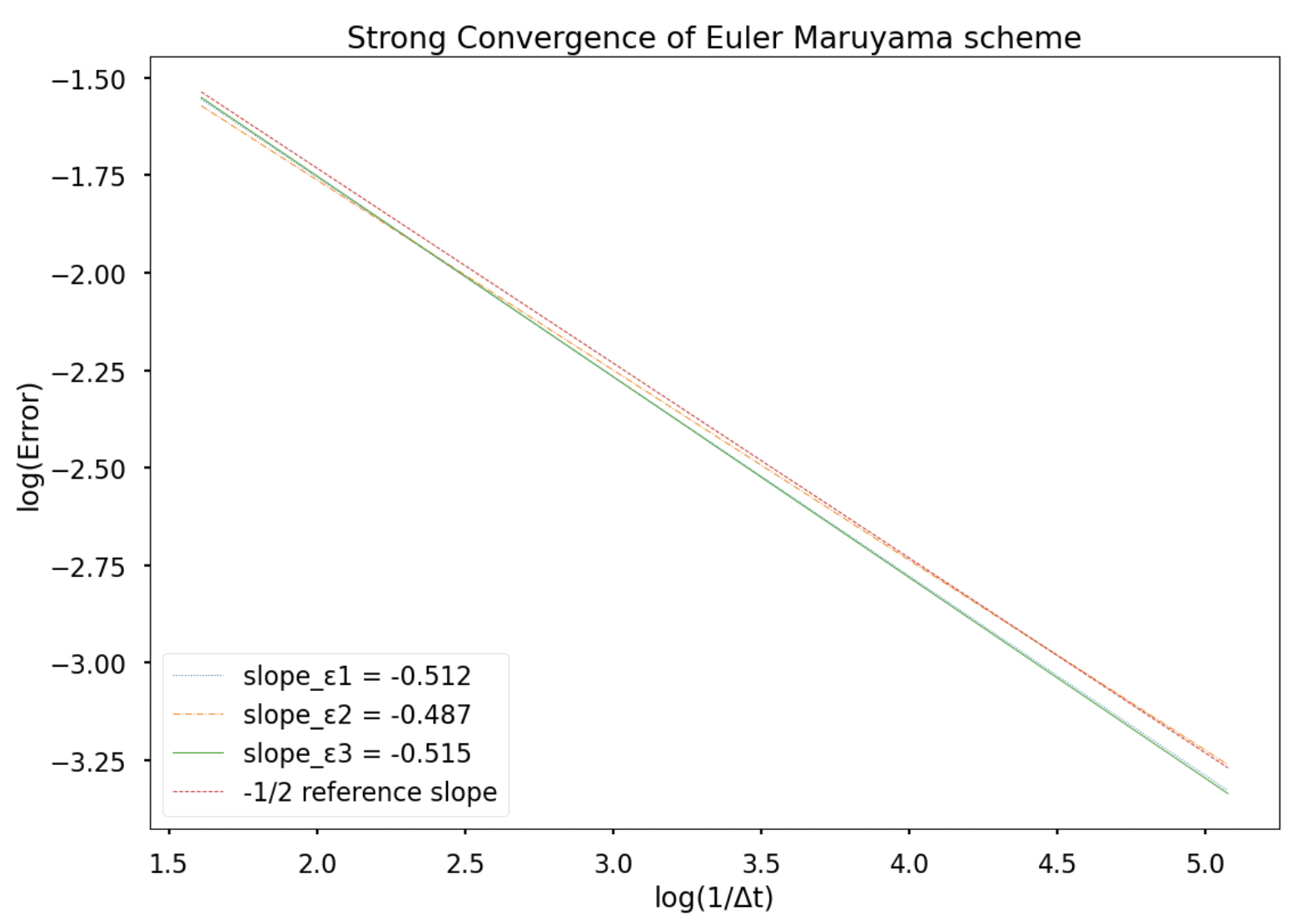}
     \caption{Strong convergence of \newline discretised scheme}\label{Fig:euler1}
   \end{minipage}\hfill
   \begin{minipage}{0.55\textwidth}
     \includegraphics[width=.8\linewidth]{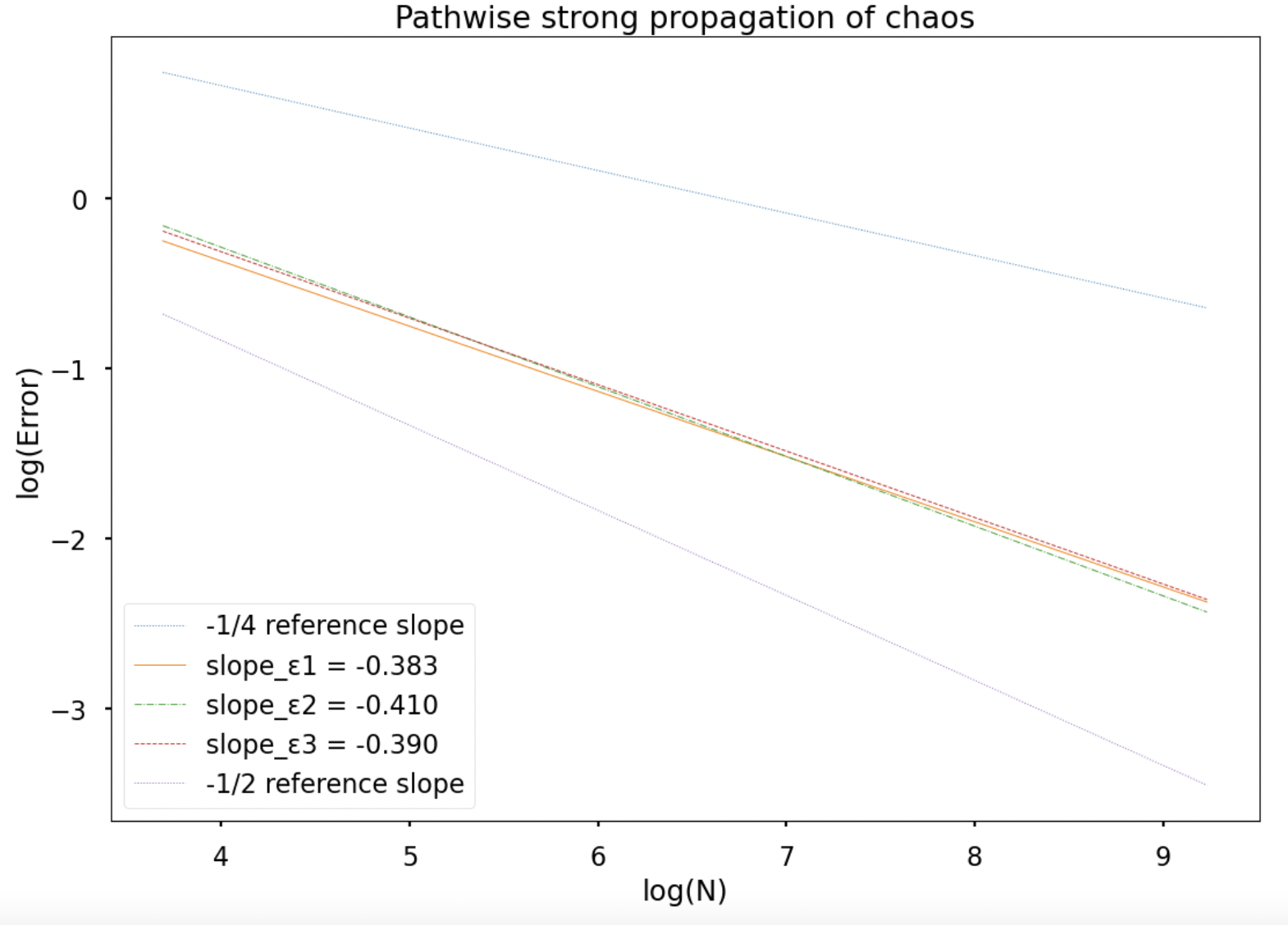}
     \caption{Pathwise strong convergence \newline of particle system}\label{Fig:particle2}
   \end{minipage}
\end{figure}
To illustrate the propagation of chaos result, in Figure \ref{Fig:particle2} we plot the following RMSE error across increasing $N$:
$ \text{error := }\sqrt{\frac{1}{2N}\sum^{2N}_{i=1}(S^{i,2N}_T - \tilde{S}^{i,2N}_T)^2},$
where both particle systems $\{{S}^{i,2N}_T\}_{i \in \{1,\dots, 2N\}}$ and  $\{\tilde{S}^{i,2N}_T\}_{i \in \{1,\dots, 2N\}}$ consist of $2N$ particles and use the same Brownian motions while for the particles ${\tilde{S}}^{i,2N}_T$, the leverage function is computed using only the first $N$ particles. 
Using $\epsilon = c*S_0N^{-1/5}$, we observe a strong convergence rate roughly of order $0.4$ in $N$ uniformly in $c$. 
The theoretical result in \cite{smith}, Proposition 3.1, gives an order of $1/4$ in the regular setting.

We conclude that a careful and well-studied choice of the regularisation parameters $\epsilon$ and $\delta$ is crucial since it significantly affects the accuracy of the calibration. 


\end{document}